\documentclass[letterpaper,11pt,UKenglish]{article}
\usepackage[margin=1in]{geometry}
\bibliographystyle{plainurl}
\usepackage[ttscale=.9]{libertine}
\usepackage[T1]{fontenc}

\usepackage{amsmath,amsthm,amssymb}
\usepackage{todonotes}
\usepackage[colorlinks=true, allcolors=blue]{hyperref}
\usepackage{cleveref}
\usepackage{mathtools,thmtools}
\usepackage{thm-restate}

\usepackage{enumerate}

\DeclareMathOperator\arsinh{arsinh}

\DeclareMathOperator{\dist}{\mathrm{dist}}

\DeclareMathOperator{\cyl}{\mathrm{Cyl}}

\newcommand{\bigO}[1]{\mathcal{O}{\left(#1\right)}}
\newcommand{\bigOd}[1]{\mathcal{O}_d{\left(#1\right)}}
\newcommand{\bigOm}[1]{\Omega \left(#1\right)}

\newcommand{\bigTh}[1]{\Theta \left(#1\right)}
\newcommand{\etal}{\emph{et~al.}}

\newcommand{\Hyp}{\mathbb{H}}
\newcommand{\Reals}{\mathbb{R}}
\newcommand{\distH}[1]{\dist_{\Hyp^{#1}}}
\newcommand{\distHd}{\distH{d}}

\newcommand{\diam}{\mathrm{diam}}
\newcommand{\child}{\mathrm{child}}
\newcommand{\cheps}{\child_\eps}

\newcommand{\cT}{\mathcal{T}}
\newcommand{\cQ}{\mathcal{Q}}
\newcommand{\eps}{\varepsilon}
\newcommand{\up}{\uparrow}
\newcommand{\down}{\downarrow}
\newcommand{\zu}{z^\up}
\newcommand{\zd}{z^\down}
\newcommand{\x}{x_{\min}}
\newcommand{\xm}{x_{\max}}

\theoremstyle{plain}
\newtheorem{theorem}{Theorem}
\newtheorem{claim}[theorem]{Claim}
\newtheorem*{claim*}{Claim}

\newtheorem{lemma}[theorem]{Lemma}
\newtheorem{corollary}[theorem]{Corollary}
\newtheorem{observation}[theorem]{Observation}
\theoremstyle{definition} 
\newtheorem{definition}[theorem]{Definition}

\title{A Quadtree, a Steiner Spanner, and Approximate Nearest Neighbours in Hyperbolic Space}

\author{
S\'andor Kisfaludi-Bak \thanks{Department of Computer Science, Aalto University, Finland, \texttt{sandor.kisfaludi-bak@aalto.fi}} \and Geert {van Wordragen}\thanks{Department of Computer Science, Aalto University, Finland, \texttt{geert.vanwordragen@aalto.fi}}
}

\begin{document}

\maketitle

\begin{abstract}
    We propose a data structure in $d$-dimensional hyperbolic space that can be considered a natural counterpart to quadtrees in Euclidean spaces. Based on this data structure we propose a so-called L-order for hyperbolic point sets, which is an extension of the Z-order defined in Euclidean spaces.
    
    Using these quadtrees and the L-order we build geometric spanners. Near-linear size $(1+\eps)$-spanners do not exist in hyperbolic spaces, but we are able to create a Steiner spanner that achieves a spanning ratio of $1+\eps$ with $\mathcal O_{d,\eps}(n)$ edges, using a simple construction that can be maintained dynamically. As a corollary we also get a $(2+\eps)$-spanner (in the classical sense) of the same size, where the spanning ratio $2+\eps$ is almost optimal among spanners of subquadratic size.
    
    Finally, we show that our Steiner spanner directly provides
    a solution to the approximate nearest neighbour problem: given a point set $P$ in $d$-dimensional hyperbolic space we build the data structure in $\mathcal O_{d,\eps}(n\log n)$ time, using $\mathcal O_{d,\eps}(n)$ space. Then for any query point $q$ we can find a point $p\in P$ that is at most $1+\eps$ times farther from $q$ than its nearest neighbour in $P$ in $\mathcal O_{d,\eps}(\log n)$ time. Moreover, the data structure is dynamic and can handle point insertions and deletions with update time $\mathcal O_{d,\eps}(\log n)$.
\end{abstract}

\clearpage
\section{Introduction}
Hyperbolic spaces have special properties that set them apart from the more familiar Euclidean spaces: they expand exponentially and are tree-like, which makes them the natural choice to represent hierarchical structures.
Hyperbolic geometry has applications in several fields, including special relativity, topology, visualisation, machine learning, complex network modelling, etc.~\cite{ungar2013einstein,thurston1982three,lamping1995focus,NickelK18,krioukov2010hyperbolic}. With the growing interest in the larger scientific community, there are growing computational and graphical/visualisation needs. It is becoming increasingly important to develop basic data structures and algorithms for hyperbolic spaces. Despite clear and growing interest in machine learning~\cite{NickelK18,GaneaBH18} and the random graph/graph modelling communities~\cite{BlasiusFK16,FriedrichK18,BringmannKL19,BlasiusFKMPW22}, the data structures and algorithms in this very natural geometric setting have been largely overlooked. If we wish to process and analyse data in hyperbolic spaces in the future, the basic theory for this processing needs to be established.

Quadtrees in Euclidean spaces~\cite{FinkelB74} are among the simple early geometric data structures that have proven to be useful both in practical algorithms and in theory \cite{Aluru04,CGAA3rd,GeomAppAlg}. They form the basis of various algorithms by being able to `zoom in' efficiently. Quadtrees provide a hierarchical structure, as well as a way to think of ordering points of the plane (or higher-dimensional spaces) using the so-called Z-order curve \cite{Zorder}. They can be used as a basis for nearest neighbour algorithms~\cite{GeomAppAlg}. The first question addressed by this article is as follows.

\begin{quote}\textit{
    Q1. Is there a natural hyperbolic equivalent to Euclidean quadtrees?}
\end{quote}

Given a point set $P$ in the Euclidean plane (henceforth denoted by $\Reals^2$), a quadtree of $P$ can be defined as follows. Let $\sigma_0$ be a minimal axis-parallel square containing $P$, and let $T$ be a tree graph whose root corresponds to $\sigma_0$. Then consider a square $\sigma$ and the corresponding vertex $v_\sigma$ of $T$ where $|\sigma\cap P|\geq 2$ (starting with $\sigma = \sigma_0$). We subdivide $\sigma$ into four squares of half the side length of $\sigma$. Each smaller square is associated with a new vertex that is connected to $v_\sigma$. This procedure is repeated for each square $\sigma$  where $\sigma\cap P \geq 2$ exhaustively, until all leaves of $T$ correspond to squares that contain at most one point from $P$. The squares are called the \emph{cells} of the quadtree, and we can speak of parent/child and ancestor/descendant relationships of cells by applying the terms of the corresponding vertices in $T$. The \emph{level} of a cell or vertex is its distance to the root of $T$ along the shortest path in $T$ (i.e., the root has level $0$).

Some crucial properties of Euclidean quadtrees (slightly relaxed) are used by several algorithms. 
\begin{enumerate}
    \item \emph{Diameter doubling}. If $C'$ is a child cell of $C$, then $c_1<\diam(C')/\diam(C)<c_2$ where $0<c_1<c_2<1$ are fixed constants, and $\diam(C)$ denotes the diameter of the cell $C$.
    \item \emph{Fatness}. Each cell $C$ contains a ball that has diameter at least constant times the diameter of~$C$. Thus cells are so-called \emph{fat} objects in $\Reals^2$ \cite{fatness}.
    \item \emph{Bounded degree}. Each cell has at most $k$ children cells for some fixed constant $k$.
    \item \emph{Same-level isometry}. Cells of the same level are isometric, that is, any cell can be obtained from any other cell of the same level by using a distance-preserving transformation.
\end{enumerate}

Could the above four properties be replicated by a quadtree in the hyperbolic plane? Unfortunately this is not possible: the volume of a ball in hyperbolic space grows exponentially with its radius (thus hyperbolic spaces are not \emph{doubling spaces}). Consequently, for large cells, a cell of constant times smaller diameter than its parent will only cover a vanishingly small volume of its parent. This rules out having properties 1, 2, and 3 together. Property 4 also poses a unique challenge: while the hyperbolic plane provides many types of~\emph{tilings} one could start with, there is no transformation that would be equivalent to scaling in Euclidean spaces. This is unlike the scaling-invariance exhibited by Euclidean quadtrees. Moreover, in small neighbourhoods hyperbolic spaces are locally Euclidean, meaning that a small ball in hyperbolic space can be embedded into a Euclidean ball of the same radius with distortion infinitesimally close to 1. Thus a hyperbolic quadtree needs to operate at two different scales: at small distances we need to work with an almost-Euclidean metric, while at larger distances we need to work on a non-doubling metric.

Quadtrees in Euclidean spaces give rise to \emph{spanners} through so-called well-separated pair decompositions. Spanners are a way of representing approximate distances among the points of a point set $P$ without taking up quadratic space (i.e., without storing all $\binom{n}{2}$ pairwise distances). A spanner is a geometric graph, that is, a graph where vertices correspond to $P$, and edges are straight segments (or geodesic segments) between some pairs of points with edge lengths being equal to the distances of the underlying space. A geometric graph is called a $t$-spanner if for any pair of points $p,q\in P$ their graph distance is at most $t$ times longer than their distance in the underlying space.

Geometric spanners have a vast literature. We encourage the interested reader to look at the book of Narasimhan and Smid~\cite{narasimhan2007geometric} for an overview.
The simplest $(1+\eps)$-spanner is the greedy spanner, which considers all pairs of points sorted on increasing distance and adds an edge between a pair when their current distance in the graph is too large. This gives a spanner that is optimal in many aspects, but constructing it takes $\bigO{n^2 \log n}$ time~\cite{Bose10}.
For constant dimension $d$, a $\Theta$-graph can be constructed in $\bigO{n \log n}$ time and is a $(1+\eps)$-spanner with $\bigO{n / \eps^{d-1}}$ edges.
As shown by Le and Solomon~\cite{le2019}, this edge count is optimal.
Well-separated pair decompositions~\cite{CallahanK95} in Euclidean spaces are built upon quadtrees and naturally give rise to $(1+\eps)$-spanners that have linearly many edges.

Unfortunately well-separated pair decompositions with the required properties do not exist in hyperbolic spaces (see Corollary~\ref{cor:spanner_lower}). In fact, spanners that are not complete graphs cannot achieve a spanning ratio less than $2$ in any hyperbolic space. Luckily a spanning ratio of $1+\eps$ can be achieved if one does not insist on a geometric graph whose vertices are exactly $P$, but allows more vertices. Such spanners are called Steiner spanners, i.e., these are geometric graphs on a set $P\cup Q$ that satisfy the spanner property for pairs of points from $P$. The points of $Q$ are referred to as \emph{Steiner points}.
In Euclidean space, these Steiner points allow spanners to be sparser:
Steiner spanners can be constructed that only use $\bigO{n / \eps^{(d-1)/2} \cdot \log^2 \frac{1}{\eps}}$ edges \cite{le2019} and there is a lower bound of $\bigOm{n / \eps^{(d-1) / 2}}$ edges \cite{bhore2022}.
To our knowledge, there is no published construction of a Steiner spanner in hyperbolic space, although it was already studied in a more general setting by Krauthgamer and Lee~\cite{KrauthgamerL06}.

Quadtrees are also applicable as a basis of \emph{nearest neighbour search}: for a fixed point set $P$ and a query point $q$, can we find the point $p\in P$ that is closest to $q$ among all points in $P$? Nearest neighbour search is a well-studied and fundamental problem with numerous applications in machine learning, data analysis, and classification. Exact solutions to queries are typically only feasible in very low dimensional spaces; indeed the Euclidean methods carry over to the hyperbolic setting~\cite{NielsenN10,bogdanov2011}. However, in dimensions $3$ and above, even in the Euclidean setting, it is much more feasible to compute \emph{approximate nearest neighbours}, i.e., to find a point $p$ that is at most $1+\eps$ times farther from $q$ than the nearest neighbour of $q$.
Arya~\etal~\cite{Arya98} give a data structure for this that is constructed in $\bigO{d n \log n}$ time, requires $\bigO{dn}$ space and allows for queries in $\bigO{\lceil 1 + 6d/\eps \rceil^d \log n}$ time.
One can increase this space requirement to decrease the query time~\cite{Arya09}.
Well-separated pair decompositions~\cite{CallahanK95} and locality-sensitive orderings~\cite{doi:10.1137/19M1246493} can also be used here, with similar guarantees in query time but a worse dependence on $\eps$ and $d$ in preprocessing time and space requirements.
For higher dimensions, the exponential dependence on $d$ becomes a problem, so there one can sacrifice the optimal dependence on $n$~\cite{Indyk98}.
The second question we wish to answer is as follows.

\begin{quote}\textit{
    Q2. Is there a data structure for approximate nearest neighbour search in hyperbolic space with similar guarantees as in Euclidean space?}
\end{quote}

Our answer to both questions is affirmative.

\subparagraph*{Our contribution.} We propose a hyperbolic quadtree that satisfies properties 1, 2, as well as property 4 in case of cells of super-constant diameter. Moreover, our hyperbolic quadtree resembles a Euclidean quadtree for cells of sub-constant diameter. The authors believe that this is the first quadtree-type structure that has been proposed for hyperbolic space while considering such properties.
Based on the quadtree we are able to construct a new order and space-filling curve, named the L-order, which serves as a hyperbolic extension of the Euclidean Z-order. We show that a few hyperbolic quadtrees (and corresponding L-orders) can create a useful cover of $\Hyp^d$ in the following sense.

\begin{restatable}{theorem}{locsens}
\label{thm:locsens}
    For any $\Delta \in \Reals_+$, there is a set of at most $3d+3$ infinite hyperbolic quadtrees such that any two points $p,q \in \mathbb H^d$ with $\distHd(p,q) \leq \Delta$ are contained in a cell with diameter $\bigO{d\sqrt{d}} \cdot \distHd(p,q)$ in one of the quadtrees.
\end{restatable}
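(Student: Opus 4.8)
The plan is to prove \Cref{thm:locsens} by a \emph{shifting} argument, the hyperbolic analogue of the classical Euclidean fact that $d+1$ suitably translated copies of a quadtree capture every pair $p,q$ in a common cell of side $\bigO{d}\cdot\distR{d}(p,q)$, hence of diameter $\bigO{d\sqrt d}\cdot\distR{d}(p,q)$. Recall that Euclidean argument: at the dyadic scale $s\in\bigl((d+1)r,\,2(d+1)r\bigr]$, where $r=\distR{d}(p,q)$, translate the grid by the $d+1$ vectors $\tfrac{i}{d+1}(1,\dots,1)\,s$, $i=0,\dots,d$; along each coordinate axis the segment $pq$ has extent at most $r<s/(d+1)$, so it straddles a cell wall of at most one of the $d+1$ shifts, and therefore at least one shift avoids all $d$ axes and contains $\{p,q\}$ in a single cell of side $\le 2(d+1)r$. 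I would replicate this with ``translate by a vector'' replaced by ``apply an isometry'' and ``grid walls'' replaced by the hypersurfaces bounding the cells of a hyperbolic quadtree.

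Concretely, I would work in horospherical coordinates $(x,\tau)\in\Reals^{d-1}\times\Reals$ about a chosen ideal point, in which the hyperbolic quadtree has its two promised regimes: for sub-constant diameter it is essentially the Euclidean quadtree on the horosphere $\{\tau=\mathrm{const}\}$ scaled by $e^{\tau}$, whereas for super-constant diameter consecutive levels are related by the dilation isometry $(x,\tau)\mapsto(e^{-c}x,\tau+c)$, making all cells of a given level isometric (property~4). For a fixed pair with $r:=\distHd(p,q)\le\Delta$, I would pick the level $\ell$ at which cells have diameter $\bigTh{d}\cdot r$; by property~2 such a cell also has inradius $\bigTh{d}\cdot r$, and by property~1 this $\ell$ is essentially unique. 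The technical heart is then the claim: in a neighbourhood of $p,q$, the boundary of the level-$\ell$ partition is a union of $\bigO{d}$ families of almost-parallel hypersurfaces --- the $d-1$ horospherical ``grid'' hyperplanes together with the foliating horospheres (the ``depth'' direction, which is what plays the role of the extra coordinate here) --- and within each family consecutive hypersurfaces are $\bigOm{d}\cdot r$ apart near the pair. Granting this, $d+1$ ``diagonally shifted'' copies of the quadtree, the shift being a composition of horocyclic translations with a dilation, dodge every family for at least one copy, by the same counting as above.

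The reason a single block of $d+1$ quadtrees is not enough --- and where the factor $3$ enters --- is twofold: the hyperbolic quadtree has no global scaling symmetry, so a shift of the correct magnitude at level $\ell$ is the wrong magnitude at other levels; and the two regimes meet at a fixed seam scale $\bigTh{1}$, where cell shapes change and the clean diagonal-shift count degrades. I would therefore use three blocks of $d+1$ quadtrees: one tuned to the Euclidean regime (all scales $r\ll 1$, where genuine scale-invariance lets one shift-set serve every level), one tuned to the hyperbolic regime (all scales $r\gg1$, where the dilation self-similarity of property~4 lets one shift-set serve every level), and one tuned to the $\bigTh{1}$-band of seam scales; finiteness of this last block is where the dependence on $\Delta$ would enter. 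This gives $3(d+1)=3d+3$ quadtrees, and the cell produced has diameter $\bigO{d\sqrt d}\cdot r$ after the $\sqrt d$ loss from side length to diameter.

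I expect the main obstacle to be making the ``almost-parallel families with spacing $\bigOm{d}\cdot r$'' statement rigorous from the actual construction, and checking compatibility of the chosen shifts with it --- namely that the image of a hyperbolic quadtree under a horocyclic-translation-composed-with-dilation is again a legal hyperbolic quadtree with properties~1, 2, 4, and that in the Euclidean regime the dilation component does not break the Euclidean shifting count. A secondary difficulty is tracking constants through the coordinate change and across the two regimes tightly enough to reach $\bigO{d\sqrt d}$ and not a worse polynomial in $d$; I anticipate this pins down the precise target level (cell side $\approx(d+1)r$) and the $1/(d+1)$-spaced shift amounts.
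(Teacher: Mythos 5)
Your high-level framing (shift argument, horospherical coordinates, separating a ``depth'' direction from $d-1$ ``width'' directions) matches the paper, but the structural decomposition you build on is wrong, and this is not a cosmetic difference. You allocate the factor $3$ to ``three regime blocks'' (small, large, and seam scales), claiming that within the hyperbolic regime ``the dilation self-similarity of property~4'' lets one shift set serve all levels. That premise fails: an isometry $T_{\sigma,\tau}$ preserves width, height, and hence diameter, so it maps a level-$\ell$ cell to a level-$\ell$ cell; there is no isometry relating consecutive levels for $\ell\geq 0$, whose cells have diameters $2\arsinh(2^{2^\ell-2})$ that roughly double with $\ell$. The actual source of the $3$ in the paper is a one-dimensional shifting lemma (Lemma~\ref{lem:1Dshift}) applied to the depth direction: under $\log\pi_z$ the $z$-ranges of cells form a genuine one-dimensional Euclidean quadtree, and a mod-trick (Lemma~\ref{lem:mod}, due to Chan~\etal) gives a scale-independent set of $3$ shifts that works at \emph{every} level simultaneously. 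Likewise, in the $\tilde\pi_x$ projection the cells of a fixed $z$-range form a $(d-1)$-dimensional Euclidean grid with power-of-two side length, and Chan's centrality lemma (Lemma~\ref{lem:central}) supplies a single scale-independent set of at most $d+1$ shifts. The total is the product $3\cdot(D+1)\leq 3d+3$, with the composition $T_{\sigma_i,\tau_j}$ ranging over those choices. So there is no ``seam'' block and no per-regime tuning --- the same shift set handles $\ell<0$ and $\ell\geq 0$ uniformly; the lack of a global scaling symmetry, which you correctly identify as an obstacle, is resolved by the scale-independence of these two lemmas, not by multiplying the number of quadtree families.

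Your ``$d+1$ diagonal shifts dodging $d$ families of hypersurfaces'' plan has a second, independent gap: the hyperbolic isometry $T_{\sigma,\tau}$ does not act as a translation on the combined $(\tilde\pi_x,\log\pi_z)$-coordinates. It translates $\log\pi_z$ by $\log\sigma$, but acts on $\tilde\pi_x$ affinely, by $x\mapsto \sigma x + \tau\sqrt{d-1}$, so a ``diagonal'' family of isometries does not produce the evenly spaced translates the pigeonhole count requires. The paper sidesteps this coupling by first choosing the depth shift $\sigma_i$ (Lemma~\ref{lem:1Dshift} gives the right $\sigma_i$ \emph{regardless} of $\tau$), and only then choosing $\tau_j$ to make $p$ central in the $x$-grid (Lemma~\ref{lem:central}); the $\sigma_i$ rescaling is an isometry of that grid, so the width argument still goes through. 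Finally, your proposal stops at ``the two points share a cell''; the remaining and nontrivial half of the paper's proof is to convert ``$(1/(2D+2))$-central in $\tilde\pi_x$ and in the lower third of the $z$-range'' into a lower bound on the hyperbolic distance from $p$ to the cell boundary via $\arsinh(|x_1|/z)$ (Lemma~\ref{lem:planepoint}), and then to bound the ratio of the cell diameter to that distance by a case analysis over $\ell\geq 0$ (with sub-cases $\alpha\geq 2$ and $\alpha<2$) and $\ell<0$. This is where the $\bigO{d\sqrt{d}}$ actually comes from, and it is absent from your sketch.
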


Krauthgamer and Lee~\cite{KrauthgamerL06} achieve a similar decomposition in a more general setting (on visual geodesic Gromov-hyperbolic spaces), but their construction is more implicit as it is based on a decomposition of the so-called Gromov boundary. For example, it does not immediately lend itself to an easily computable L-order.
Theorem~\ref{thm:locsens} matches the Euclidean result given by Chan, Har-Peled and Jones~\cite[Lemma 3.7]{doi:10.1137/19M1246493}. Note however that one cannot create their locality sensitive orderings in hyperbolic spaces, see Corollary~\ref{cor:spanner_lower}. We take a different route: we show that one can construct a Steiner spanner using the quadtrees of Theorem~\ref{thm:locsens}.

\begin{restatable}{theorem}{steinerspanner}\label{thm:steinerspanner}
    Let $P \in \Hyp^d$ be a given set of $n$ points and $\eps \in (0,1/2]$.
    We can construct a Steiner $(1+\eps)$-spanner for $P$ with Steiner vertex set $Q$ that has $n \cdot d^{\bigO{d}}\log(1/\eps)/\eps^d$ edges.
    The constructed spanner is bipartite with parts $P$ and $Q$, where each $p\in P$ has degree at most $d^{\bigO{d}}\log(1/\eps)/\eps^d$.
    Furthermore, a point can be inserted or deleted in $\log n \cdot d^{\bigO{d}}\log(1/\eps)/\eps^d$ time, where each insertion or deletion creates or removes at most $d^{\bigO{d}}\log(1/\eps)/\eps^d$ Steiner points and edges.
\end{restatable}

Apart from the Steiner points, this again matches the results from locality sensitive orderings \cite{doi:10.1137/19M1246493}, which are the current best results for dynamic Euclidean spanners.
With Steiner points, the current best Euclidean result has $\bigOd{\frac{n}{\eps^{(d-1)/2}} \log^2 \frac{1}{\eps}}$ edges \cite{le2019} but cannot be maintained dynamically. In the hyperbolic setting, Krauthgamer and Lee~\cite{KrauthgamerL06} gave a Steiner spanner with comparable properties, but their construction and proof is not published.

The most important step in constructing a spanner based on a cover such as the one in Theorem~\ref{thm:locsens} is to consider connections for pairs of points in a fixed quadtree cell $C$ whose distance is of the same magnitude as the cell diameter $\diam(C)$. This is typically done by subdividing this cell into cells that are of diameter $\eps \diam(C)$, and choosing hub points in each of these subcells. In our setting however the number of these subcells is unbounded, so we need a different technique. The crucial insight in our Steiner spanner construction is that the geodesics connecting pairs of points in $C$ of distance $\Omega(\diam(C))$ all go through a small region of $C$. By placing a Steiner point in this region and connecting it to the relevant points, we can approximate a large number of pairwise distances efficiently.

Our spanner has a nice bipartite structure, and with a simple trick it can be made into a $2+\eps$ spanner with the same properties (that is, a spanner without Steiner points.) Moreover, we can use this spanner to answer dynamic approximate nearest neighbour queries.

\begin{restatable}{theorem}{approxnn}\label{thm:ann}
    We can construct a data structure that uses $n \cdot d^{\bigO{d}}\log(1/\eps)/\eps^d$ space and can answer queries for a $(1+\eps)$-approximate nearest neighbour in $\log n \cdot d^{\bigO{d}}\log(1/\eps)/\eps^d$ time, and perform updates (point insertions and removals) in $\log n \cdot d^{\bigO{d}}\log(1/\eps)/\eps^d$ time.
\end{restatable}

To our knowledge, this is the first data structure for \emph{dynamic} approximate nearest neighbour search in hyperbolic space with a rigorous analysis. Krauthgamer and Lee~\cite{KrauthgamerL06} gave a static data structure\footnote{The data structure of Krauthgamer and Lee~\cite{KrauthgamerL06} achieves constant additive error, but they note that it can be extended to our current setting of a multiplicative $(1+\eps)$-approximation.} that has comparable query time but exponential storage. They also have variant with $O(n^2)$ storage and $O(\log^2 n)$ query time (each for constant $d$), but this latter construction is not published.
Our construction is comparatively simple, more efficient, and dynamic.
Once again our result matches the Euclidean result from locality sensitive orderings \cite{doi:10.1137/19M1246493}.
However, some Euclidean data structures designed specifically for approximate nearest neighbours only require $\bigO{dn}$ space and allow updates in $\bigO{d \log n}$ time \cite{Arya98}.

\subparagraph*{Further related work.}
In addition to Krauthgamer and Lee~\cite{KrauthgamerL06}, approximate nearest neighbour search in hyperbolic spaces has been studied by Wu and Charikar~\cite{wu2020nearest} and by Prokhorenkova~\etal~\cite{prokhorenkova2022graphbased}.

Wu and Charikar~\cite{wu2020nearest} describe various methods of using any algorithm for Euclidean nearest neighbour search to find exact and approximate nearest neighbours in hyperbolic space.
Their algorithms perform well in practice, but logically cannot outperform their Euclidean counterparts and they perform worse the further from Euclidean the data set is.
More concretely, they give exact and $(1+\eps)$-approximate nearest neighbours using queries to an exact algorithm for Euclidean nearest neighbours, and $\sqrt{w}(1+\eps)$-approximate nearest neighbours using queries to an algorithm for $(1+\eps)$-approximate Euclidean nearest neighbours, where $w>1$ is a parameter that influences the query time.

In general metric spaces, graph-based nearest neighbour search can be used.
These are based on a \emph{$k$-nearest neighbour graph}, a geometric graph where each point is connected to its $k$ nearest neighbours.
To answer a query, algorithms will start from some vertex in the graph and travel along edges that decrease the distance to the query point, until reaching a local minimum.
Prokhorenkova~\etal~\cite{prokhorenkova2022graphbased} show that for hyperbolic space these algorithms perform well in practice and give theoretical guarantees under some assumptions (for example, the points are uniformly distributed in a ball of radius $R$).
Their data structure has size $\bigO{n \log n \cdot M^d}$ and when $R \ll 1/\sqrt{d}$ allows queries in $\bigO{n^{1/d} \cdot M^d}$ time, where $M$ is a constant related to the approximation factor.
When $R \gg \log d$, the query time becomes $\bigO{n^{1/d} \cdot M^d R / e^{R(d-1)/d}}$:
queries become faster when points are further apart.

Some decompositions similar to our quadtree have been considered in the context of hyperbolic random graphs.
Von Looz, Meyerhenke and Prutkin~\cite{looz2015} introduced a \emph{polar quadtree}.
For this, they represent points in the hyperbolic plane by polar coordinates:
the angle $\phi$ and distance $r$ w.r.t.\ a fixed direction and origin.
Each cell of the quadtree is then of the form $[\phi_{\min}, \phi_{\max}) \times [r_{\min},r_{\max})$.
As in a Euclidean quadtree, a cell can be split into four children.
Cells of the same level can be set to have the same area, but their diameters will vary significantly, which also means they are not isometric and can get arbitrarily thin.

Various papers about hyperbolic random graphs~\cite{bode2015,fountoulakis2018,müller2019} use a discretisation similar to the binary tiling (which is introduced in the next section and forms the basis of our quadtree), but based on the polar coordinate system instead of the half-space model.

\subparagraph*{Organisation of the paper.}
Section~\ref{sec:preliminaries} introduces the properties and concepts from hyperbolic space used in this paper, as well as some notation.
Section~\ref{sec:quadtree} gives the construction of our quadtree and proves several of its properties.
In particular, it also shows that a technique similar to shifting can be used, and that we can compute a hyperbolic equivalent to the Z-order we refer to as the L-order.
Section~\ref{sec:approx} first gives a lower bound result for spanners in hyperbolic space.
After that, it uses the proven quadtree properties for the construction of a Steiner $(1+\eps)$-spanner and shows that it can be maintained dynamically, as well as used for approximate nearest neighbour search.
Section~\ref{sec:constapprox} shows that constant-approximate nearest neighbours can already be found using only shifting and the L-order.
Section~\ref{sec:quadtreeproof} gives a computation-heavy proof for some fundamental properties of our quadtree.
Finally, we end with a conclusion in Section~\ref{sec:conclusion}.

\section{Preliminaries}\label{sec:preliminaries}
The reader should be able to grasp the overall structure of our quadtree and some of the consequences of this structure even if they are new to hyperbolic geometry; we suggest thoroughly understanding binary tilings (see below), the basic shapes (shortest paths, hyperplanes), and probing the distance formula to get some basic intuition. A more thorough reader will need some familiarity with the basics of hyperbolic geometry and trigonometry, as well as the Poincar\'e half-space model. For more background on hyperbolic geometry, please see~\cite{cannon1997hyperbolic} and the textbooks~\cite{iversen1992hyperbolic,thurston97three,benedetti1992lectures}.
Note that our results are presented in the half-space model, but are in fact model-independent. Our algorithms are also presented for point sets whose coordinates are given in the half-space model. Apart from numerical challenges ---something we will not tackle in this article---, working in the half-space model is not restrictive as conversion between various models of hyperbolic geometry is straightforward.

Let $\Hyp^d$ denote the $d$-dimensional hyperbolic space of sectional curvature $-1$. 
We denote points as $(x,z)$ for $x \in \Reals^{d-1}$ and $z \in \Reals^+$, with the distance
\[
    \distHd((x,z), (x',z')) = 2\arsinh\left(\frac12 \sqrt{\frac{\|x-x'\|^2 + (z - z')^2}{zz'}} \right),
\]
where $\|x\|$ refers to the $(d-1)$-dimensional Euclidean norm of $x$.
When $x=x'$ this reduces to $\left| \ln\left( \frac{z}{z'} \right) \right|$.

For the rest of the paper, we fix a particular half-space model, and describe our results in this model. We will think of the $z$ direction as going ``up'', and the $d-1$ other axes are going ``sideways''.

The transformations $T_{\sigma,\tau}(x,z) = \sigma \cdot (x + \tau, z)$, where we translate $x$ with a vector $\tau \in \Reals^{d-1}$ and then scale all coordinates by $\sigma \in \Reals^+$, are isometric.
This can be verified by applying $T_{\sigma,\tau}$ to both arguments in the distance formula.

Consider now all the transformations $T_{\sigma,\tau}$ where $\sigma = 2^k$ for some integer $k$ and $\tau$ is an integer vector. One can observe that acting upon the Euclidean unit cube with corners $(0,\dots,0,1)$ and $(1,\dots,1,2)$ these transformations together create a tiling of the half-space model with isometric tiles. This tiling has been named the \emph{binary tiling}, and it was introduced by Böröczky~\cite{Boro}, see also~\cite{cannon1997hyperbolic}. The binary tiling is the basis of our quadtree construction. The $2$-dimensional binary tiling is illustrated in Figure~\ref{fig:binary}.

\begin{figure}[ht]
    \centering
    \includegraphics[scale=.5]{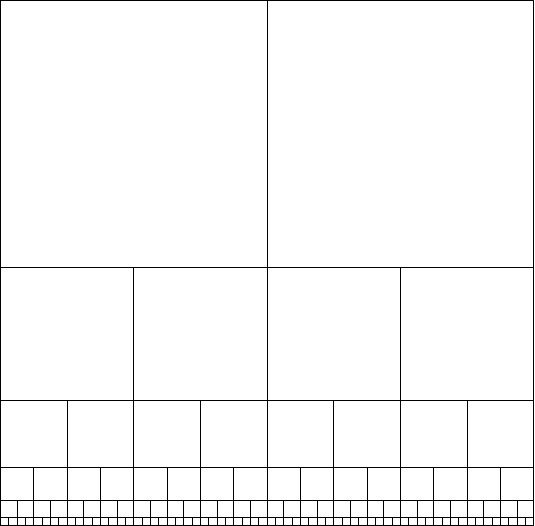}
    \caption{A portion of the binary tiling in the half-plane model. The reader may wish to think of the hyperbolic plane itself as the surface one gets by gluing rubber squares \emph{of the same size}  along their sides as shown in the picture, resulting in a smooth and homogeneous infinite surface, which is the hyperbolic plane itself.}
    \label{fig:binary}
\end{figure}

In the half-space model, the shortest path (\emph{geodesic}) between two points only matches the Euclidean line segment between them when this segment is vertical.
Otherwise, the geodesic is an arc of a Euclidean circle that has its origin at $z=0$ (this means it is similar to the shortest path in the dual graph of the binary tiling).
For hyperplanes the situation is similar:
vertical Euclidean hyperplanes are hyperbolic hyperplanes, but the other hyperbolic hyperplanes are Euclidean spheres with their origin somewhere on the plane $z=0$.
This paper will often also use horizontal Euclidean hyperplanes, so it is worth noting that these are \emph{not} hyperplanes in the hyperbolic sense (for example, they can be intersected twice by the same geodesic), but a different shape called a horosphere.
Finally, the half-space model is conformal: Euclidean and hyperbolic angles are the same. Additionally it means that any Euclidean sphere that does not intersect $z=0$ is also a hyperbolic sphere, but its hyperbolic origin will lie lower than its Euclidean origin.

\section{A hyperbolic quadtree}\label{sec:quadtree}
The Euclidean quadtree is a tree whose vertices are associated with axis-parallel hypercubes.
Correspondingly, our hyperbolic quadtree will be a tree whose vertices are associated with so-called \emph{cube-based horoboxes}.
For this paper, it will be useful to define an \emph{axis-parallel horobox} as the shape that corresponds to a Euclidean axis-parallel box in a fixed half-space model.
Such a horobox $B$ can be defined by its corner points $(\x(B),\zd(B))$ and $(\xm(B), \zu(B))$, where one is minimal and the other maximal in all coordinates.
Notice that a horobox is bounded by 2 horospheres and $2(d-1)$ hyperplanes.
The cube-based horobox is a special axis-parallel horobox, defined as follows.

\begin{definition}
    In a fixed half-space model, a \emph{cube-based horobox} $C$ is an axis-parallel horobox with $\frac{\zu(C)}{\zd(C)} = 2^{h}$ and $\frac{\xm(C) - \x(C)}{\zd(C)} = (w,\dots,w)$, where $w=w(C)$ is called the \emph{width} of $C$ and $h=h(C)$ is called the \emph{height} of $C$.
\end{definition}

It is worth noting that for $\zd(C)=1$, the width corresponds to the Euclidean width of $C$.
On top of that, defining the width and height in this way ensures that two cube-based horoboxes $C',C$ with the same width and height are congruent to one another:
setting $\sigma = \frac{\zd(C')}{\zd(C)}$ and $\tau = \x(C') / \sigma - \x(C)$ gives $T_{\sigma,\tau}(C) = C'$.

\begin{lemma}
\label{lem:diam}
    The cube-based horobox $C$ has diameter
    \[ \diam(C) = \begin{cases}
        2\arsinh\left(\frac12 w(C) \sqrt{d-1}\right) & \text{if }w(C) \geq \sqrt{\frac{2^{h(C)} - 1}{d-1}}, \\
        2\arsinh\left(\frac12 \sqrt{\frac{(d-1)w(C)^2 + (2^{h(C)}-1)^2}{2^{h(C)}}} \right)  & \text{otherwise.}
    \end{cases}\]
\end{lemma}
\begin{proof}
    Let $w=w(C)$ and $h=h(C)$.
    The distance $\distHd((x,z), (x',z'))$ is monotone increasing in $\|x - x'\|$.
    Assuming $z' \leq z$, the distance is also monotone decreasing in $z'$.
    This means the diameter is w.l.o.g.\ given by $\distHd\left((0,z), ((w, \dots, w),1)\right) = 2\arsinh\left( \frac{1}{2} \sqrt{\frac{(d-1) w^2 + (z-1)^2}{z}} \right)$, for some $1 \leq z \leq 2^{h}$.

    In this interval the function is convex in $z$, so the maximum is attained at either $z=1$ or $z=2^{h}$.
    Thus it is either $2\arsinh\left(\frac12 w \sqrt{d-1}\right)$ or $2\arsinh\left(\frac12 \sqrt{\frac{(d-1)w^2 + (2^{h}-1)^2}{2^{h}}} \right)$.
    These are equal when $w = \sqrt{\frac{2^{h} - 1}{d-1}}$.
    Both are continuous, so it suffices to check two combinations of values to see that $z=1$ gives the largest value if and only if $w \geq \sqrt{\frac{2^{h} - 1}{d-1}}$.
\end{proof}

\subsection{Hyperbolic quadtree construction.} \label{sec:construct}
One property of hyperbolic space that makes quadtrees more complicated is that it behaves differently at different scales:
in small enough neighbourhoods the distortion compared to Euclidean space becomes negligible,
but at larger scales the hyperbolic nature becomes more and more pronounced.
This means that the quadtree also has to work differently at different scales.

For a point set $P \subset \Hyp^d$, the hyperbolic quadtree $\cQ(P)$ is a graph whose vertices are regions of hyperbolic space.
We can construct $\cQ(P)$ as follows.
First, we find the Euclidean minimum bounding box of $P$ in the half-space model. 
From this we can get a minimum bounding cube-based horobox $C_\text{bound}$ where $\zd(C_\text{bound}) = \min_{p\in P} z(p)$ (i.e., we shift the horobox up as much as possible). 

In case of $d=2$, our goal is to ensure that quadtree cells of level $\ell\geq 0$ correspond to horoboxes whose vertices come from a fixed binary tiling. Note that unlike the Euclidean setting where the quadtree levels are usually defined based on a point set, our quadtree has its levels defined based on the absolute size of the cells. Higher levels will correspond to larger cells, and level $0$ will act as the transition between the hyperbolic and Euclidean ways of splitting cells.
The levels can be constructed starting at level $\ell=0$, where cells are the tiles of a binary tiling.
The binary tiling is closely related to binary trees:
we get a binary tree from the tiling by making a graph where each vertex corresponds to a horobox and edges correspond to them being vertically adjacent.
When we have a binary tree, we can partition it into a small number of identical subgraphs by `cutting' at half its depth, see Figure~\ref{fig:binarycut}(ii).
This gives a natural way to split cells of level $\ell\geq 1$ into $1+2^\ell$ isomorphic cells of level $\ell - 1$, one corresponding to the `top' part of the binary tree, and the rest corresponding to the subtrees defined by the vertices at depth $\ell$. When splitting cells of level $\ell \leq 0$, we are already in a setting where the distortion is very small compared to the Euclidean setting; here we simply use Euclidean dissection into four smaller cells, each with the same Euclidean width and hyperbolic height.
Figure~\ref{fig:quadtree}(iii) shows an example of the resulting hyperbolic quadtree.

\begin{figure}[t]
    \centering
    \includegraphics{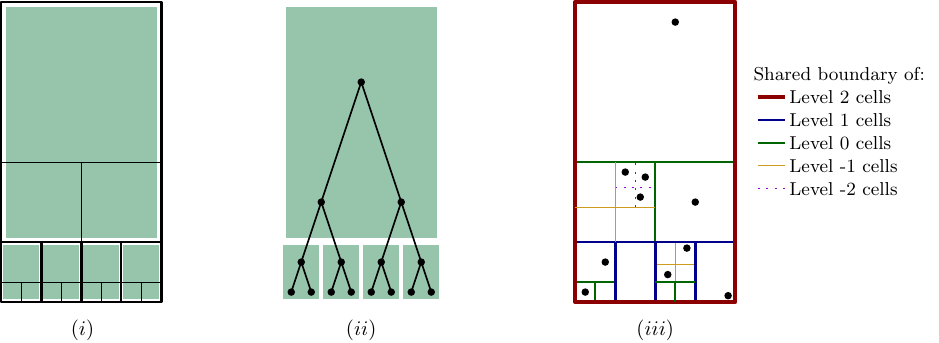}
    \caption{(i) A quadtree cell of level $2$ is split into $5$ cells of level $1$.
    (ii) The binary tree of depth $4$ is split into $5$ isomorphic binary trees of depth $2$.
    (iii) A hyperbolic quadtree where the root (red) is a level $2$ cell, which is split into $5$ isometric cells of level $1$ that are separated by blue Euclidean segments.}
    \label{fig:binarycut}
    \label{fig:quadtree}
\end{figure}

We can also define the quadtree for general $d$.
First, we define its root cell $C_\text{root}$ with $\x(C_\text{root}) = \x(C_\text{bound})$ and $\zd(C_\text{root}) = \zd(C_\text{bound})$, and furthermore
\begin{itemize}
    \item If\footnote{Notice that the $1/\sqrt{d-1}$ terms in this definition simplify the diameter formula of quadtree cells, making them independent of $d$: see Lemma~\ref{lem:cells}.} $w(C_\text{bound}) \leq \frac{1}{\sqrt{d-1}}$ and $h(C_\text{bound}) \leq 1$, we find the smallest integer $\ell$ such that $w(C_\text{bound}) \leq \frac{2^\ell}{\sqrt{d-1}}$ and $h(C_\text{bound}) \leq 2^\ell$, then let $w(C_\text{root}) = \frac{2^\ell}{\sqrt{d-1}}$ and $h(C_\text{root}) = 2^\ell$.
    \item Otherwise, we find the smallest integer $\ell$ such that $w(C_\text{bound}) \leq \frac{2^{2^\ell-1}}{\sqrt{d-1}}$ and $h(C_\text{bound}) \leq 2^\ell$, then let $w(C_\text{root}) = \frac{2^{2^\ell-1}}{\sqrt{d-1}}$ and $h(C_\text{root}) = 2^\ell$. 
\end{itemize}

We then subdivide cells to get their children, but unlike with the Euclidean quadtree this subdivision depends on the size of the cell.
If we have a cell $C$ with $h(C) \leq 1$, then we split it into $2^d$ smaller ones using the
axis-parallel Euclidean hyperplanes through $\left( \frac{\x(C) + \xm(C)}{2}, \sqrt{\zd(C) \zu(C)} \right)$.
For larger cells, we also use the Euclidean hyperplane $z=\sqrt{\zd(C) \zu(C)}$.
This gives two horoboxes with height $h(C)/2$, where the top one has width $w(C) / 2^{\frac{h(C)}{2}}$ but the bottom one still $w(C)$.
Thus, we also split the bottom horobox into a grid of $2^{\frac{h(C)}{2}(d-1)}$ horoboxes of width $w(C) / 2^{\frac{h(C)}{2}}$ so that in total we have $2^{\frac{h(C)}{2}(d-1)} + 1$ cells of the same size.

\begin{lemma}\label{lem:cells}
    At any level $\ell$, cells are cube-based horoboxes with height $2^\ell$.
    For $\ell \geq 0$, the width is $\frac{2^{2^\ell - 1}}{\sqrt{d-1}}$ and the diameter is $2\arsinh(2^{2^\ell - 2})$.
    For $\ell < 0$, the width is $\frac{\alpha \cdot 2^\ell}{\sqrt{d-1}}$ and the diameter is $2\arsinh\left(\frac12 \sqrt{\frac{\alpha^2 \cdot 4^\ell + (2^{2^\ell}-1)^2}{2^{2^\ell}}} \right)$, where $\alpha \in \left(\frac{1}{2},1\right]$ is a cell-specific value. Moreover, if a cell $C$ of level $\ell$ has corresponding value $\alpha$ and a child cell $C'$ of $C$ has corresponding value $\alpha'$, then $\alpha'/\alpha \in \{1,2^{-2^{\ell-1}}\}$.
\end{lemma}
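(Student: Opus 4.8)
The plan is a downward induction on the level $\ell$, starting from the root's level $\ell_0$ and propagating through the two subdivision rules of Section~\ref{sec:construct}; at each step we only need to follow the width parameter, the height, and---for $\ell<0$---the value $\alpha$. For the base case, the construction makes the root at level $\ell_0$ equal to $R(x,z,\tfrac{2^{\ell_0}}{\sqrt{d-1}},2^{\ell_0})$ (first bullet, which forces $\ell_0\le0$ and corresponds to $\alpha=1$) or to $R(x,z,\tfrac{2^{2^{\ell_0}-1}}{\sqrt{d-1}},2^{\ell_0})$ (second bullet, which forces $\ell_0\ge1$); either way this is the claimed form, using that the two width formulas agree at $\ell=0$ since $2^{0}=2^{2^0-1}$.

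For the inductive step, take a level-$\ell$ cell $C=R(x',z',w,h)$ of the claimed form. If $h=2^\ell>1$ (so $\ell\ge1$), cutting at $z=z'2^{h/2}$ and gridding the lower box gives $2^{2^{\ell-1}(d-1)}+1$ cube-based horoboxes, each of height $2^{\ell-1}$ and width $\tfrac{w}{2^{h/2}}=\tfrac{2^{2^\ell-1}}{2^{2^{\ell-1}}\sqrt{d-1}}=\tfrac{2^{2^{\ell-1}-1}}{\sqrt{d-1}}$, which is exactly the level-$(\ell-1)$ form; note in particular that every level-$0$ cell then has $\alpha=1$. If $h\le1$ (so $\ell\le0$), the horizontal cut $z=z'2^{2^{\ell-1}}$ turns the part above it into the cube-based horobox $R(x',z'2^{2^{\ell-1}},w\cdot2^{-2^{\ell-1}},2^{\ell-1})$, and the $d-1$ axis-parallel hyperplanes through the Euclidean midpoints split this into $2^{d-1}$ ``top'' children of width $\tfrac{w}{2}\cdot2^{-2^{\ell-1}}$, while the part below the cut splits into $2^{d-1}$ ``bottom'' children of width $\tfrac{w}{2}$; substituting $w=\tfrac{\alpha2^\ell}{\sqrt{d-1}}$, both are of the form $\tfrac{\alpha'2^{\ell-1}}{\sqrt{d-1}}$ with $\alpha'/\alpha\in\{1,2^{-2^{\ell-1}}\}$, as claimed.

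It remains to pin down $\alpha\in(\tfrac12,1]$ and the diameters. A one-step induction on $\alpha$ is too weak, because a single factor $2^{-2^{\ell-1}}$ can push $\alpha$ below $\tfrac12$; instead set $\bar\ell=\min(0,\ell_0)$, so that every level-$\bar\ell$ cell has $\alpha=1$ (the root if $\ell_0\le0$, all level-$0$ cells if $\ell_0\ge1$), and observe that descending from $\bar\ell$ to $\ell$ multiplies $\alpha$ by a sub-product of $\{2^{-2^j}:\ell\le j<\bar\ell\}$, so $\alpha\ge 2^{-\sum_{j=\ell}^{\bar\ell-1}2^j}=2^{-(2^{\bar\ell}-2^\ell)}>2^{-1}=\tfrac12$ because $\bar\ell\le0$, while $\alpha\le1$ is immediate. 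For the diameters we plug the established width and height into Lemma~\ref{lem:diam} and check the branch: for $\ell\ge0$ we get $(d-1)w^2=2^{2^{\ell+1}-2}\ge 2^{2^\ell}-1=2^h-1$ (equality only at $\ell=0$), hence $\diam(C)=2\arsinh(2^{2^\ell-2})$; for $\ell<0$ the opposite inequality $\alpha^24^\ell\le2^{2^\ell}-1$ holds for every $\alpha\le1$ and $\ell\le-1$---an elementary estimate of $t\mapsto2^t-1-t^2$ on $(0,\tfrac12]$ with $t=2^\ell$---so the second branch of Lemma~\ref{lem:diam} applies and yields the stated formula.

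I expect the main obstacle to be exactly this last bundle: verifying that the part of a small cell above the horizontal cut is genuinely a cube-based horobox with the right width parameter (so that the recursion ``closes up''), together with the fact that the bound $\tfrac12$ is only approached in the limit of infinitely deep recursion, which is why the geometric-series estimate is needed rather than a naive induction.
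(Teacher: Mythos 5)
Your proof is correct and follows essentially the same route as the paper: tracking width, height and $\alpha$ through the two subdivision rules, then bounding $\alpha$ via the geometric series $\prod 2^{-2^{i-1}}$ and reading the diameter off Lemma~\ref{lem:diam}. You are somewhat more careful than the paper's own (terse) proof---you distinguish the two root cases $\ell_0\le0$ and $\ell_0\ge1$, verify explicitly which branch of Lemma~\ref{lem:diam} applies for $\ell\ge0$ versus $\ell<0$, and anchor the $\alpha$-bound at $\bar\ell=\min(0,\ell_0)$---but the underlying argument is the same.
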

The final claim in this lemma shows that sibling cells will become extremely close to being isometric, as $2^{-2^{\ell-1}}$ rapidly converges to $1$ as $\ell$ goes to $-\infty$.
\begin{proof}
    The statement for $\ell \geq 0$ follows directly from the construction and Lemma~\ref{lem:diam}.
    At $\ell \leq 0$, a cell $C$ gets split into $2^d$ cells that all have height $h(C)/2$, where the lower $2^{d-1}$ have width $\frac{w(C)}{2}$ and the upper $2^{d-1}$ have width $\frac{w(C)}{2} / 2^{\frac{h(C)}{2}}$.
    At level 0 the height is $1$ and the width $\frac{1}{\sqrt{d-1}}$, so cells at level $\ell < 0$ have height $2^\ell$ and width $\frac{\alpha \cdot 2^\ell}{\sqrt{d-1}}$, for a yet to be determined value $\alpha$.
    Its lower child cells have width $\frac{w(C)}{2} = \frac{\alpha \cdot 2^{\ell-1}}{\sqrt{d-1}}$, while the upper children have width $\frac{w(C)}{2} / 2^{\frac{h(C)}{2}} = \frac{\alpha \cdot 2^{\ell-1}}{2^{2^{\ell-1}} \sqrt{d-1}}$.
    Thus the width of the child cells follows the same formula, with $\alpha$ the same or replaced by $\alpha \cdot 2^{-2^{\ell-1}}$.
    At $\ell = 0$ we have $\alpha = 1$, thus it is also the highest value it can take for other cells.
    The lowest possible value of $\alpha$ at level $\ell$ is given by $\prod_{i=\ell+1}^0 2^{-2^{i-1}} = 2^{-\sum_{i=\ell}^{-1} 2^i} = 2^{2^\ell-1}>\frac{1}{2}$.
\end{proof}

\begin{restatable}{theorem}{quadtree}\label{thm:quadtree}
    The hyperbolic quadtree of $P \subset \Hyp^d$ has the following properties:
    \begin{enumerate}[(i)]
        \item If $C'$ is a child cell of $C$, then $0.42 < \diam(C')/\diam(C) < 0.561$.
        \item If $C$ is a cell at level $\ell$, then $\diam(C) = \bigTh{2^\ell}$.
        \item Cells are $\Omega(1/\sqrt{d})$-fat.
        \item A quadtree cell $C$ has $\max(2^d,2^{\bigO{ d \cdot \diam(C) }})$ children; in particular, the root has $\max(2^d,d^{\bigO{ d \cdot \diam(P) }})$ children.
        \item Cells of the same level $\ell \geq 0$ are isometric, and cells of level $\ell<0$  are cube-based horoboxes with the same height whose width differs by less than a factor two.
    \end{enumerate}
\end{restatable}

The proof of Theorem~\ref{thm:quadtree} is straightforward but calculation-heavy. Its proof can be found in Section~\ref{sec:quadtreeproof}.

\subsection{Covering with hyperbolic quadtrees}\label{sec:covering}
Euclidean quadtrees are useful in computing nearest neighbours and other related problems because of a particular distance property: there is a small collection of quadtrees one can define such that any pair of points at distance $\delta$ will be contained in a cell of diameter $\bigO{\delta}$ in one of the quadtrees. Moreover, the quadtrees can be generated by simply taking one quadtree and translating (\emph{shifting}) it with different vectors.
We will prove an analogous property for our hyperbolic quadtrees, though our ``shifts'' are the transformations $T_{\sigma,\tau}$ instead of translations. Let us first introduce an infinite quadtree.

Consider the binary tiling that contains the Euclidean hypercube with opposite corners $(0,\dots,0,1)$ and $(1,\dots,1,2)$. This tiling forms level $0$ of the infinite quadtree $\cQ^d_\infty$. Then each level $\ell<0$ is defined by subdividing these cells according to the construction in Section~\ref{sec:construct}. For $\ell>0$ we define the level $\ell$ cells by unifying $2^{(d-1)\ell}+1$ cells of level $\ell-1$ into a horobox, doing the splitting described in Section~\ref{sec:construct} in reverse. We do the unification in such a way that the Euclidean hyperplanes $x_i=0$ for $i=1,\dots,d-1$ as well as $z=1$ remain cell boundaries at each level $\ell$. More formally, cells of level $\ell$ are the cube-based horoboxes $C$ where
\[
    h(C) = 2^\ell,\quad
    \zd(C) = 2^{b \cdot h(C)},\quad
    w(C) = \frac{2^{2^\ell-1}}{\sqrt{d-1}},\quad
    \x(C) = a \cdot w(C) \cdot \zd(C),
\]
for each $(a,b)$ where $a\in \mathbb{Z}^{d-1}$ and $b\in \mathbb{Z}$.
As a result we get the infinite quadtree $\cQ^d_\infty$ where for each level $\ell$ the cells of the quadtree define a subdivision of $\Hyp^d$.

For $x,y \in \Reals^+$ we define $x \bmod y = x - y \lfloor x/y \rfloor$.
Chan \etal~\cite{doi:10.1137/19M1246493} observed that shifting a quadtree by certain special vectors results in a useful shifts also for levels with smaller cells. The following lemma was used to define these useful shifts.
\begin{lemma}[Chan \etal~\cite{doi:10.1137/19M1246493}]\label{lem:mod}
    Let $n > 1$ be a positive odd integer, and consider the set
    \[ X = \{ i/n \mid i=0, \dots, n-1 \}. \]
    Then, for any $\alpha = 2^{-\ell}$, where $\ell \geq 0$ is an integer, we have that
    \[ X \bmod \alpha = \{ i/n \bmod \alpha \mid i=0, \dots, n-1 \} \]
    is equal to the set $\alpha X = \{ \alpha i/n \mid i=0, \dots, n-1 \}$.
\end{lemma}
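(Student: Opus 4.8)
The plan is to reduce the statement to the elementary fact that multiplication by $2^\ell$ is a bijection of $\mathbb{Z}/n\mathbb{Z}$ when $n$ is odd. Write $\alpha = 2^{-\ell}$ throughout.

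First I would rewrite the real-number operation ``$\bmod$'' in terms of a fractional part so that it can be turned into a reduction of integers. Using the definition $x \bmod y = x - y\lfloor x/y\rfloor$, for each $i \in \{0,\dots,n-1\}$ we have
\[
    \frac{i}{n} \bmod 2^{-\ell}
    = 2^{-\ell}\left( \frac{2^\ell i}{n} - \left\lfloor \frac{2^\ell i}{n} \right\rfloor \right)
    = 2^{-\ell}\cdot \frac{(2^\ell i) \bmod n}{n},
\]
where the last step uses that $2^\ell i$ is an integer, so $\frac{2^\ell i}{n} - \lfloor \frac{2^\ell i}{n}\rfloor$ equals $\frac{(2^\ell i) \bmod n}{n}$ (this time with ``$\bmod$'' denoting reduction of integers).

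Next I would use that $n$ is odd, so $\gcd(2^\ell, n) = 1$, hence the map $i \mapsto (2^\ell i) \bmod n$ is a permutation of $\{0,1,\dots,n-1\}$. Therefore, as $i$ ranges over $\{0,\dots,n-1\}$, the residue $(2^\ell i)\bmod n$ takes each value in $\{0,\dots,n-1\}$ exactly once, and combining this with the display above gives
\[
    X \bmod \alpha
    = \left\{\, 2^{-\ell}\cdot \frac{j}{n} \;\middle|\; j = 0, \dots, n-1 \,\right\}
    = \alpha X.
\]
To finish I would just remark that both sides are genuinely $n$-element sets: the numbers $\alpha j/n$ for $j=0,\dots,n-1$ are pairwise distinct and all lie in $[0,\alpha)$, so no collisions occur and the equality is an equality of sets, not merely of multisets. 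There is no substantial obstacle in this argument; the only point requiring care is the bookkeeping between ``$\bmod$'' as an operation on reals and ``$\bmod$'' as reduction of integers, which the fractional-part rewriting in the first step handles cleanly.
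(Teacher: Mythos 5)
Your proof is correct. Note that the paper does not actually reprove this lemma; it is quoted from Chan, Har-Peled and Jones~\cite{doi:10.1137/19M1246493} and used as a black box, so there is no in-paper argument to compare against. Your reduction --- rewriting $\frac{i}{n}\bmod 2^{-\ell}$ via the fractional part as $2^{-\ell}\cdot\frac{(2^\ell i)\bmod n}{n}$ and then invoking that $i\mapsto (2^\ell i)\bmod n$ permutes $\{0,\dots,n-1\}$ because $\gcd(2^\ell,n)=1$ for odd $n$ --- is precisely the standard argument for this statement, and it is the one given in the cited source. The bookkeeping between the real-valued $\bmod$ (as defined just before the lemma, $x\bmod y = x - y\lfloor x/y\rfloor$) and integer reduction mod $n$ is handled correctly, and your closing remark that the resulting $n$ values $\alpha j/n$ are distinct and lie in $[0,\alpha)$ tidily confirms the set equality. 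No gaps.
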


We will look at the mapping $\tilde\pi_z(p) = \log z(p)$, where $\log$ denotes the base-$2$ logarithm.
Applying $\tilde\pi_z$ to a cell $C$ gives an interval in $\Reals$, which we will refer to as the \emph{$z$-range} of $C$.
We can also apply $\tilde\pi_z$ to the quadtree as a whole and merge all nodes that have the same $z$-range.
A cell $C$ has $z$-range $[\log \zd(C), \log \zu(C))$ and by construction its children have $z$-range $\left[\log \zd(C), \frac{\log\zd(C) + \log\zu(C)}{2} \right)$ or $\left[\frac{\log\zd(C) + \log\zu(C)}{2}, \log \zu(C) \right)$.
Thus, we have a tree of intervals where splitting an interval in the middle gives its two children.
This is exactly the structure of a one-dimensional Euclidean quadtree.
Additionally, $\tilde\pi_z(T_{\sigma,\tau}(p)) = \log\sigma + \tilde\pi_z(p)$, meaning that shifts are an isometry we can apply to this one-dimensional quadtree.
This lets us use the following lemma:

\begin{lemma}\label{lem:1Dshift}
    Let $\cT$ be a one-dimensional Euclidean quadtree whose largest cell is $[0,2)$.
    For any two points $p,q \in [0,1)$, there is a shift $\sigma \in \{0, \frac13, \frac23 \}$ such that when added to the quadtree $p + \sigma$ and $q + \sigma$ are contained in a cell of $\cT$ with length $< 3|p-q|$ and one of the points is in the lower $\frac13$ of the cell.
\end{lemma}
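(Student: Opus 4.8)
The plan is to adapt the classical quadtree-shifting argument, using Lemma~\ref{lem:mod} with $n=3$ to control the three shifts $0,\tfrac13,\tfrac23$ modulo an arbitrary power of two. Write $\delta = |p-q|$; we may assume $p<q$ (so $\delta>0$; if $p=q$ there is nothing to prove). I would first dispatch the regime $\delta > \tfrac23$: since $0\le p<q<1$ and $q-p=\delta>\tfrac23$ we get $p < q - \tfrac23 < \tfrac13$, so the shift $\sigma=0$ already places $p$ in $[0,\tfrac13)$, the lower third of the cell $[0,1)$; as $p+\delta=q<1$, both $p$ and $q$ lie in $[0,1)$, whose length $1$ is below $3\delta$. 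So from now on assume $\delta\le\tfrac23$.

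The heart of the argument is the choice of target cell length: let $L$ be the unique power of two in $[\tfrac32\delta,3\delta)$ (any interval $[x,2x)$ contains exactly one power of two). Since $\delta\le\tfrac23$ we have $L<3\delta\le2$, hence $L\le1$ and $L=2^{-\ell}$ with $\ell\ge0$, making Lemma~\ref{lem:mod} applicable with $\alpha=L$. The bounds $\tfrac32\delta\le L<3\delta$ are calibrated so that $L<3\delta$ (the required length bound), $L>\delta$ (so two points $\delta$ apart can share a length-$L$ cell), and $\delta\le\tfrac23 L$, i.e.\ $L/3\le L-\delta$ (so the lower-third window is the binding constraint). For any shift $\sigma$, the position of $p+\sigma$ inside its length-$L$ cell is $(p+\sigma)\bmod L=\bigl((p\bmod L)+(\sigma\bmod L)\bigr)\bmod L$; by Lemma~\ref{lem:mod} with $n=3$, as $\sigma$ ranges over $\{0,\tfrac13,\tfrac23\}$ the value $\sigma\bmod L$ ranges over $\{0,L/3,2L/3\}$. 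These three offsets are equally spaced by $L/3$ on $\Reals/L\Reals$, so exactly one of the three positions $(p+\sigma)\bmod L$ lies in $[0,L/3)$, and I take that $\sigma$.

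It remains to check the three requirements for this $\sigma$. First, $(p+\sigma)\bmod L\in[0,L/3)$ says $p+\sigma$ lies in the lower third of its cell $[kL,(k+1)L)$. Second, $(p+\sigma)\bmod L<L/3\le L-\delta$ gives $(q+\sigma)\bmod L=(p+\sigma)\bmod L+\delta<L$, so $q+\sigma$ is in the same cell. Third, shifting by at most $\tfrac23$ keeps $p+\sigma,q+\sigma\in[0,\tfrac53)\subset[0,2)$, and a dyadic interval of length $L\le1$ meeting $[0,\tfrac53)$ is contained in $[0,2)$, so $[kL,(k+1)L)$ really is a cell of $\cT$, of length $L<3\delta$. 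The one genuine subtlety is the calibration of $L$: it must be a power of two, strictly below $3\delta$, at least $\tfrac32\delta$ (otherwise the equally-spaced residues might all miss the admissible window $[0,\min(L/3,L-\delta))$), and at most $1$ for Lemma~\ref{lem:mod} to apply; the borderline value $L=2$ is precisely what the separate $\delta>\tfrac23$ case absorbs. Everything else is routine arithmetic.
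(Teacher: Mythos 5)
Your proof is correct and rests on the same machinery as the paper's --- Lemma~\ref{lem:mod} with $n=3$ at a power-of-two scale tied to $\delta=|p-q|$, choosing the one shift whose residue drops $p$ into the lower third of its cell. You pick the cell length $L\in[\tfrac32\delta,3\delta)$ directly (dispatching $\delta>\tfrac23$ first so that $L\le1$ and Lemma~\ref{lem:mod} applies), whereas the paper fixes $\alpha\in[\delta,2\delta)$ and then passes to a cell of length $\alpha$ or $2\alpha$ according to whether $\delta\le\tfrac23\alpha$; one checks these coincide in every case, so yours is the same argument with a cleaner, single-case parameterization.
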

\begin{proof}
    Let $\alpha = 2^{-\ell}$ for some $\ell \in \mathbb N_0$ such that $\alpha/2 < |p - q| \leq \alpha$.
    We apply Lemma~\ref{lem:mod} for $n=3$ and $\alpha$: as a consequence, a cell of size $\alpha$ can be thought of as being shifted by one of $\{ 0, \frac{\alpha}{3}, \frac{2\alpha}{3} \}$.
    These shifts divide the space into intervals of length $\frac{\alpha}{3}$ where any three adjacent intervals will make up the cell of the quadtree under some shift.
    Thus, $p$ and $q$ will always be contained in a cell with length $\alpha$ when $|p-q| \leq \frac{2\alpha}{3}$.
    When $|p-q| > \frac{2\alpha}{3}$, we still know $|p - q| \leq \alpha \leq \frac{4\alpha}{3}$ so they must still be contained in a cell $C$ with length $2\alpha$.
    Thus, in the worst case, $C$ has length $2\alpha < 3|p - q|$. If $C$ has $p$ and $q$ both in its higher $\frac{2}{3}$, then the quadtree shifted by $\frac{\alpha}{3}$ or $\frac{2\alpha}{3}$ will contain a cell $C'$ of the same level as $C$ with the desired property.
\end{proof}

Now we will consider all cells with some given $z$-range and apply the mapping $\tilde\pi_x(p) = x(p) \sqrt{d-1}$.
This produces a grid in $\Reals^{d-1}$ where the cells have some power of two as the side length.
We again have access to shifts as an isometry, because $\tilde\pi_x(T_{\sigma,\tau}(p)) = \tau \sigma \sqrt{d-1} + \sigma \cdot \tilde\pi_x(p)$.
This combination lets us use a lemma by Chan~\cite{Chan1998} about \emph{$\delta$-centrality}.
We say that a point is \emph{$\delta$-central} in an axis-parallel hypercube of side length $r$ when its Euclidean distance to the cell boundary is at least $\delta r$.

\begin{lemma}[Chan~\cite{Chan1998}]\label{lem:central}
    Suppose $d$ is even.
    Let $v^{(j)} = (j/(d+1), \dots, j/(d+1)) \in \Reals^d$.
    For any point $p \in \Reals^d$ and $r = 2^{-\ell}\ (\ell \in \mathbb N)$, there exists $j \in \{0, 1, \dots, d\}$ such that $p + v^{(j)}$ is $(1/(2d+2))$-central in its hypercube with side length $r$.
\end{lemma}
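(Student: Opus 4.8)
The plan is to reduce the $d$-dimensional statement to a one-dimensional pigeonhole count on the circle $\Reals/r\Reals$, using the fact that $v^{(j)}$ has all coordinates equal to $j/(d+1)$, so that choosing $j$ applies the \emph{same} scalar shift to every coordinate of $p$ at once.

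First I would unwind $\delta$-centrality coordinatewise: a point $y$ is $\delta$-central in its cell of the side-$r$ grid iff $\min(t_i, r - t_i) \ge \delta r$ for every $i \in \{1,\dots,d\}$, where $t_i = y_i \bmod r$; equivalently $t_i \in [\delta r, (1-\delta)r]$. Fixing $\delta = 1/(2d+2)$ and a coordinate $i$, the set of scalar shifts $s \in \Reals/r\Reals$ for which $p_i + s$ violates this is precisely the open arc $\{\, s : (p_i + s) \bmod r \in (-\delta r, \delta r) \,\}$, which has length $2\delta r = r/(d+1)$; call it the \emph{bad arc} of coordinate $i$. Note the arc is genuinely open, since $[0,\delta r) \cup ((1-\delta)r, r)$ wraps to exactly the open interval $(-\delta r, \delta r)$ on $\Reals/r\Reals$, and this openness will matter.

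Next I would locate the $d+1$ candidate scalar shifts $0, \frac{1}{d+1}, \dots, \frac{d}{d+1}$ modulo $r$. Since $d$ is even, $n := d+1$ is odd, and $r = 2^{-\ell}$ with $\ell \ge 0$, so Lemma~\ref{lem:mod} (applied with this $n$ and $\alpha = r$) gives $\{\, j/(d+1) \bmod r : j = 0, \dots, d \,\} = \{\, jr/(d+1) : j = 0, \dots, d \,\}$: these are $d+1$ distinct, equally spaced points of $\Reals/r\Reals$ with consecutive gap exactly $r/(d+1)$. Because an open arc of length $r/(d+1)$ contains at most one point of an evenly spaced family whose gap is $r/(d+1)$, the bad arc of each coordinate $i$ forbids at most one value of $j$. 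With only $d$ coordinates, at most $d$ of the $d+1$ candidates are bad, so some $j \in \{0, \dots, d\}$ survives for all coordinates simultaneously; for that $j$, the point $p + v^{(j)}$ is $(1/(2d+2))$-central in its cell.

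I do not expect a real obstacle here; the argument is essentially a counting balance that happens to be tight. The two points to handle with care are (i) that the bad arc is open with length \emph{equal} to the gap of the shifted lattice, which is exactly the borderline case in which ``at most one lattice point per arc'' still holds, and is the reason the constant $1/(2d+2)$ (rather than anything smaller) is the right threshold; and (ii) invoking Lemma~\ref{lem:mod} with the correct hypotheses ($n = d+1$ odd, $\alpha = r \le 1$) so that the candidate shifts are guaranteed to form an evenly spaced set on $\Reals/r\Reals$ rather than clustering.
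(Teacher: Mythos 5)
The paper states this lemma purely as a citation to Chan~\cite{Chan1998} and does not reproduce a proof, so there is no in-paper argument to compare against. Your proof is correct and is in essence Chan's original pigeonhole argument: the diagonal form of $v^{(j)}$ reduces the problem to a single scalar shift on $\Reals/r\Reals$; the bad set for each coordinate is an open arc of length $2\delta r = r/(d+1)$; Lemma~\ref{lem:mod} (with $n=d+1$ odd, which is where the parity hypothesis on $d$ enters) shows the $d+1$ candidate shifts are evenly spaced with gap $r/(d+1)$ on that circle, so each coordinate kills at most one candidate and pigeonhole over the $d$ coordinates leaves a surviving $j$. You also correctly flag the two delicate points: the openness of the bad arc (a consequence of ``at least'' in the definition of $\delta$-centrality) is exactly what makes the ``at most one lattice point per arc'' claim hold in the tight case where arc length equals lattice spacing, and the invocation of Lemma~\ref{lem:mod} with $\alpha = r = 2^{-\ell}$, $\ell \ge 0$, is what guarantees the shifts stay evenly spaced modulo $r$.
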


We need one final observation.

\begin{observation}\label{obs:planepoint}
    The distance from $(x, z) \in \Hyp^d$ to the hyperplane $x_1=0$ is $\arsinh \frac{|x_1|}{z}$.
\end{observation}
\begin{proof}
    Fix a point $p$ and a hyperbolic hyperplane $h$ in $\Hyp^d$.
    The \emph{reflection} of $p$ on $h$ is a point $p'$ such that the geodesic $pp'$ is perpendicular to $h$, and the midpoint $t$ of $pp'$ is on $h$. We call $t$ the \emph{hyperbolic projection} of $p$ onto $h$.
    Reflecting $p = (x,z)$ on the hyperplane $x_1=0$ gives a point $p' = (x',z)$ where $x'_1 = -x_1$ but still $x'_i = x_i$ for all other $i$.
    Let $t$ be the hyperbolic projection of $p$ onto $x_1 = 0$.
    By definition, $t$ is the midpoint of $pp'$, so $\distHd(p, t) = \frac12 \distHd(p, p') = \arsinh{\frac{|x_1|}{z}}$.
\end{proof}

Using this observation and the behaviour under projections we can now prove Theorem~\ref{thm:locsens}.

\locsens*
\begin{proof}
    Let $L \in \mathbb N$ be such that the sphere with radius $\Delta$ can be covered by a cube-based horobox with width $W = \frac{2^{2^L-1}}{\sqrt{d-1}}$ and height $H = 2^L$.
    Let $D$ be $d$ rounded down to the nearest even number, i.e.\ $D = 2\lfloor d/2 \rfloor$.
    For each combination of $i \in \{0,1,2\}$ and $j \in \{0, \dots, D\}$ we define the quadtree $\cQ_{ij} = T_{\sigma_i,\tau_j}^{-1}(\cQ^d_\infty)$, with $\sigma_i = 2^{H \cdot i/3}$ and $\tau_j = (W \cdot j/(D+1), \dots, W \cdot j/(D+1))$.
    In the proof we will apply $T_{\sigma_i,\tau_j}$ to all points in $\cQ_{ij}$ instead of transforming the cells of the quadtree itself, but this has the same effect.
    
    Let $p,q$ be arbitrary points with $\distHd(p,q) \leq \Delta$.
    Lemma~\ref{lem:1Dshift} gives a $\sigma_i$ such that $\tilde\pi_z(T_{\sigma_i,\tau}(p))$ and $\tilde\pi_z(T_{\sigma_i,\tau}(q))$ are contained in a one-dimensional cell with length less than $3|\tilde\pi_z(p) - \tilde\pi_z(q)|$, where $\tau$ can have any value because it is discarded in the projection.
    This corresponds to $T_{\sigma_i,\tau}(p)$ and $T_{\sigma_i,\tau}(q)$ being contained in a $z$-range with the same length.
    One of $\tilde\pi_z(T_{\sigma_i,\tau}(p))$ and $\tilde\pi_z(T_{\sigma_i,\tau}(q))$ is in the lower $\frac13$ of this $z$-range; without loss of generality we assume this is $p$.

    We now consider the $d$-dimensional level-$\ell$ cells where $\ell$ is large enough to assure the above holds.
    According to Lemma~\ref{lem:central}, there is also a $\tau_j$ such that $\tilde\pi_x(T_{\sigma_i,\tau_j}(p))$ is $\frac{1}{2D+2}$-central in its $(d-1)$-dimensional level-$\ell$ cell $\tilde\pi_x(C)$.
    Let $p' = T_{\sigma_i,\tau_j}(p)$ and $q' = T_{\sigma_i,\tau_j}(q)$.
    If $q'$ lies in a different cell with the same $z$-range, then $\distHd(p',q')$ must be greater than the distance between $p'$ and the bounding hyperplanes of $C$.
    Without loss of generality, we can assume that $\x(C) = 0$, $\zd(C) = 1$ and consider the hyperplane $H$ given by $x_1=0$.
    By Observation~\ref{obs:planepoint} $d(p', H) = \arsinh\frac{|x_1(p')|}{z(p')}$.
    We have $x_1(p') \geq \frac{w(C)}{2D+2}$ and $z(p') \leq 2^{h(C)/3}$, because $\tilde\pi_z(p')$ is in the lower $\frac13$ of $\tilde\pi_z(C)$.
    Thus, if $\distHd(p',q') \leq \arsinh\left(\frac{w(C)}{(2D+2)2^{h(C)/3}}\right)$, then also $q' \in C$.
    For the cell $C$ at the lowest level $\ell$ where $p',q' \in C$, define $f_d(\ell) = \frac{\diam(C)}{\distHd(p',q')}$.
    
    We first consider $\ell \geq 0$, where $\diam(C_p) = 2\arsinh\left(2^{2^\ell-2} \right)$ and $w(C_p) = \frac{2^{2^\ell-1}}{\sqrt{d-1}}$ by Lemma~\ref{lem:cells}, so
    \[
        f_d(\ell)
        \leq \frac{ 2\arsinh\left(2^{2^\ell-2} \right) }{ \arsinh\left(\frac{\sqrt[3]{2^{2^\ell}}}{4(D+1){\sqrt{d-1}}}\right) }
        \leq \frac{ 2\ln\left(2^{2^\ell-1} \right) + 2\arsinh{\frac12} }{ \arsinh\left(\frac{\sqrt[3]{2^{2^\ell}}}{4(D+1){\sqrt{d-1}}}\right) }.
    \]
    To bound this further, we let $\alpha = \frac{\sqrt[3]{2^{2^\ell}}}{4(D+1){\sqrt{d-1}}}$ and first assume $\alpha \geq 2$.
    This means $2^{2^\ell} = \bigO{\alpha^3 d^{\frac{9}{2}}}$.
    We always have $\arsinh\alpha \geq \ln 2\alpha > 0$, so
    \[
        f_d(\ell)
        = \bigO{\frac{\log\left( \alpha^3 d^{\frac{9}{2}} \right)}{\arsinh\alpha}}
        = \bigO{\frac{\log\alpha + \log d}{\log\alpha}}
        = \bigO{\log d}.
    \]
    When we instead assume $\alpha < 2$, then $\arsinh\alpha \geq \frac{\arsinh2}{2} \cdot \alpha$ and
    \[
        f_d(\ell)
        = \bigO{\frac{2 \log\left(2^{2^\ell-2} \right)}{\frac{\sqrt[3]{2^{2^\ell}}}{4(D+1){\sqrt{d-1}}}}}
        = \bigO{\frac{2^\ell d \sqrt d}{\sqrt[3]{2^{2^\ell}}}}
        = \bigO{d \sqrt d}.
    \]
    
    For $\ell < 0$, Lemma~\ref{lem:diam} implies that the width is at least $\frac{\frac{1}{2} \cdot 2^\ell}{\sqrt{d-1}}$, the height is at most $1$, and the diameter is at most $2\arsinh\left(\frac12 \sqrt{\frac{4^\ell + (2^{2^\ell}-1)^2}{2^{2^\ell}}} \right)$.
    Finally, we use that $\arsinh x \leq x$ for any $x$ and $\arsinh{x} = \bigOm{x}$ for bounded $x$:
    \[
        f_d(\ell)
        \leq \frac{2\arsinh\left(\frac12 \sqrt{\frac{4^\ell + (2^{2^\ell}-1)^2}{2^{2^\ell}}} \right)}{\arsinh\left(\frac{\frac{1}{2} \cdot 2^\ell}{(2D+2)\sqrt[3]{2}\sqrt{d-1}}\right)}
        = \bigO{\frac{\sqrt{\frac{4^\ell + (2^{2^\ell}-1)^2}{2^{2^\ell}}}}{\frac{2^\ell}{d\sqrt{d}}} }
        = \bigO{d \sqrt d}.\qedhere
    \]
\end{proof}

\subsection{L-order}
When we have the Euclidean quadtree for a set of points, we can do a depth-first traversal of the tree and note in which order the points are visited.
This gives rise to the Z-order.
As it turns out, adding or removing points does not change the Z-order and for a pair of points we can determine which comes first without ever constructing a quadtree.
The only thing to specify is which infinite quadtree their quadtree would be a subset of, because a differently shifted quadtree can give different results.

We can do the same to get the L-order from a hyperbolic quadtree.
Here, we first need to define how exactly we do the depth-first traversal.
For levels $\ell > 0$, we first visit the top child and then visit the bottom children in Z-order.
For lower levels, the split is the same as for Euclidean quadtrees so we visit the children in the same order as the Z-order.

\begin{lemma}\label{lem:lorder}
    For two points $p,p' \in \Hyp^d$, we can check which comes first in the L-order for $\cQ^d_\infty$ by using $\bigO{d}$ floor, logarithm, bitwise logical and standard arithmetic operations.
\end{lemma}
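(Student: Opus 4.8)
The plan is to mimic how comparisons are made for the Euclidean Z-order: we never build a quadtree, and instead use that the visiting order of two nodes in a depth-first traversal depends only on their lowest common ancestor and on which of its children contain them. Hence the L-order position of $p$ relative to $p'$ is determined by the smallest cell $C$ of $\cQ^d_\infty$ containing both points together with the two distinct children of $C$ in which $p$ and $p'$ lie, independently of any surrounding point set. Reading off the answer is then immediate from the traversal rule: if $C$ has level $>0$ we visit its unique top child first and then its bottom children in $(d-1)$-dimensional Z-order, and if $C$ has level $\le 0$ we visit its $2^d$ children in the usual $d$-dimensional Z-order. Throughout we exploit the two projections that organise the quadtree: under $\log\pi_z$ the cells of $\cQ^d_\infty$ form a bi-infinite one-dimensional dyadic quadtree on $\Reals$, and, restricted to a fixed $z$-range, under $\tilde\pi_x$ they form an axis-aligned grid in $\Reals^{d-1}$ whose side length is a power of two that is explicitly determined by the level and the $z$-range (from the definition of $\cQ^d_\infty$ in Section~\ref{sec:covering} when the level is $\ge 0$, and via the cell-specific values $\alpha\in(\frac12,1]$ of Lemma~\ref{lem:cells} when the level is $<0$). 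Since cells nest, ``$p$ and $p'$ lie in a common level-$\ell$ cell'' holds for all $\ell$ above a threshold $\ell^*$, which is exactly the level of $C$.

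First we compute $\zeta_p=\log\pi_z(p)$ and $\zeta_{p'}=\log\pi_z(p')$ with two logarithms, and from them obtain $\ell_z$, the smallest level at which $p$ and $p'$ share a $z$-range: this is precisely the ``most significant differing bit'' computation for a one-dimensional dyadic quadtree, costing $\bigO{1}$ floor and bitwise operations. Next, for each sideways coordinate $i\in\{1,\dots,d-1\}$ we determine the smallest level $\ell_i\ge\ell_z$ at which $\sqrt{d-1}\,x_i$ and $\sqrt{d-1}\,x_i'$ lie in the same grid cell of $\cQ^d_\infty$; because the grid side length $s(\ell)$ is an explicit (doubly exponential for $\ell\ge 0$, exponential for $\ell<0$) power of two, this reduces to comparing $|x_i-x_i'|$ against $s(\ell)$ together with one grid-line test, which is $\bigO{1}$ operations per coordinate, $\bigO{d}$ in all. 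Using again that cells nest, one checks that $\ell^*=\max(\ell_z,\ell_1,\dots,\ell_{d-1})$, coordinates with $x_i=x_i'$ contributing $-\infty$; this maximum costs $\bigO{d}$ operations.

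It remains to compare $p$ and $p'$ inside the level-$\ell^*$ cell $C$. If $\ell^*>0$ and $\ell^*=\ell_z$, then the $z$-ranges already differ at level $\ell^*-1$, so exactly one of $p,p'$ lies in the upper half of $C$'s $z$-range, i.e.\ in $C$'s top child, and that point comes first; this is $\bigO{1}$. If $\ell^*>0$ and $\ell^*>\ell_z$, then $p$ and $p'$ lie in two distinct bottom children of $C$, which together form a $(d-1)$-dimensional grid; we compare their grid coordinates in Z-order by extracting, for each $i$, the most significant bit position at which the two indices differ (and its value for $p$) from the data already computed for $\ell_i$, then taking the coordinate whose differing bit is most significant, ties broken by coordinate index, and comparing that single bit — $\bigO{d}$ operations. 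Finally, if $\ell^*\le 0$, then $C$ is split into $2^d$ Euclidean-style children, $p$ and $p'$ differ in at least one of the $d$ child-selecting bits, and we answer according to the first such bit in the fixed coordinate order — again $\bigO{d}$ operations. Summing the phases yields $\bigO{d}$ floor, logarithm, bitwise and standard arithmetic operations.

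The step I expect to be the main obstacle is not any individual phase but the bookkeeping that makes the phases correct while keeping them at $\bigO{d}$: across levels $\ell\ge 0$ the cells grow doubly exponentially, so a ``level'' corresponds to a bit position only after a logarithm, and the map between a level and the bit positions \emph{inside} $C$ needed for the Z-order step of the bottom-bottom case is nonlinear; for levels $\ell<0$ the grid widths carry the cell-specific factors $\alpha$ of Lemma~\ref{lem:cells}; the sideways grid spacing depends on the current $z$-range, coupling the two projections; and the seam at level $\ell=0$ must be handled separately. None of this is deep, but organising the case analysis cleanly is where the work lies. As elsewhere in the paper we assume exact real arithmetic and set numerical issues aside.
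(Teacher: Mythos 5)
Your overall plan is the same as the paper's: work with the two projections $\log\pi_z$ and $\tilde\pi_x$, locate the smallest cell $C$ containing both $p$ and $p'$, and then read off the L-order from whether the split inside $C$ is by a horosphere (compare $z$) or by a side hyperplane (use $(d-1)$-dimensional Z-order), plus the separate Euclidean-style regime below level $0$. So the decomposition and the two key observations (the $\log\pi_z$-projection is a one-dimensional dyadic quadtree; within a fixed $z$-range $\tilde\pi_x$ gives a power-of-two grid) match the paper.

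Where you diverge is in insisting on explicitly computing the per-coordinate levels $\ell_1,\dots,\ell_{d-1}$ and $\ell^*=\max(\ell_z,\ell_1,\dots,\ell_{d-1})$. The paper avoids this entirely: it only needs to decide whether $\ell^*=\ell_z$ or $\ell^*>\ell_z$, and this is a \emph{direct test} rather than an inversion. Concretely, the paper computes $L(z,z')=\ell_z$ and the single integer $L^*=\lfloor z/2^{\ell_z}\rfloor\cdot 2^{\ell_z}+2^{\ell_z}-1$ (the exponent of the $\tilde\pi_x$-grid side length at level $\ell_z$ for the $z$-range containing $p,p'$), and then checks $L(x_i,x_i')\leq L^*$ for every $i$: if all hold, the first split is by a horosphere and you compare $z$ with $z'$; otherwise it is by a side hyperplane and the $(d-1)$-dimensional Euclidean Z-order of $x,x'$ already gives the answer, because the most-significant differing bit automatically lies in the relevant range. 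No $\ell_i$ is ever needed.

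This matters because the $\ell_i$ step is exactly where your $\bigO{1}$-per-coordinate claim is under-justified. You describe it as ``comparing $|x_i-x_i'|$ against $s(\ell)$ together with one grid-line test'', but $\ell$ is the unknown: the grid exponent $L^*(\ell)=\lfloor z/2^\ell\rfloor\cdot 2^\ell+2^\ell-1$ is only weakly increasing in $\ell$ (it stays constant across every stretch where the cell keeps being the top child of its parent), and it couples $z$ and $\ell$ nonlinearly. Inverting it is the nontrivial part, and you yourself flag this as ``the main obstacle'' and then leave it as bookkeeping. It does in fact admit a constant-time closed form (for $\ell\geq 0$, the answer is essentially $L(\lfloor z\rfloor, L(x_i,x_i'))$, clipped to $\geq\ell_z$, with a separate formula for the regime $\ell<0$ where the $\tilde\pi_x$-grid exponent is $\lfloor z\rfloor+\ell$), but this needs to be said; as written, the step that carries your operation count is asserted rather than argued. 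My suggestion is to drop the $\ell_i/\ell^*$ computation altogether and use the paper's direct $L^*$-test, which both removes the inversion and makes the $\bigO{d}$ bound immediate.

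One smaller point: in your $\ell^*\leq 0$ case you should make explicit that the Z-order has to be taken in $\tilde\pi$-coordinates, where the $x$-grid exponent at level $\ell$ is offset from the $z$-grid exponent by $\lfloor z\rfloor$; ``first such bit in the fixed coordinate order'' only determines the answer once the bit positions of the $x$- and $z$-coordinates have been put on this common scale.
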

\begin{proof}
    To compare $p$ and $p'$, we first define $\tilde\pi(p)=(\tilde\pi_x(p), \tilde\pi_z(p))$, then let $(x,z)=\tilde\pi(p)$ and $(x',z')=\tilde\pi(p')$ for convenience.
    We then check if both $\lfloor x/2^z \rfloor = \lfloor x'/2^{z'} \rfloor$ and $\lfloor z \rfloor = \lfloor z' \rfloor$.
    If that is the case, the points are in the same level-$0$ cell.
    Thus, under $\tilde\pi$ they are in a $d$-dimensional Euclidean quadtree and we can use the Z-order of $(x,z)$ and $(x',z')$ to determine which comes first.
    
    Otherwise, we need to look at the situation at the highest level $\ell$ where $p$ and $p'$ are in different cells.
    If one of the points is in the top child cell of its parent, that point comes first.
    If both are in one of the bottom child cells, then we look at which of the cells comes first in the Z-order.
    We can distinguish these cases by checking if $p$ and $p'$ are in cells of the same $z$-range at level $\ell$.
    
    Let $L(a,b) = 1 + \lfloor \log(a \oplus b) \rfloor$ give the smallest index such that all bits of index at least $L(a,b)$ in the binary representations of $a$ and $b$ match (where $\oplus$ denotes bitwise exclusive-or).
    At some level $\ell$, the points $p$ and $p'$ are in cells of the same $z$-range if $\lfloor z / 2^\ell \rfloor = \lfloor z' / 2^\ell \rfloor$.
    In other words, their binary expansions match for the bits of index at least $\ell$, meaning $L(z,z')$ is the smallest value of $\ell$ for which this holds.
    Under the $\tilde\pi_z$ projection this $z$-range starts at the nearest multiple of $2^{L(z,z')}$ below $z$, which is $\lfloor z / 2^{L(z,z')} \rfloor \cdot 2^{L(z,z')}$.
    Without the projection this becomes $2^{\lfloor z / 2^{L(z,z')} \rfloor \cdot 2^{L(z,z')}}$.
    We can now look at the cells with this $z$-range.
    As noted in Section~\ref{sec:covering}, under $\tilde\pi_x$ these cells form a $(d-1)$-dimensional Euclidean grid.
    The side length is $\sqrt{d-1}$ times the Euclidean width of the original horobox cells, which is by definition their width multiplied by their $z$-coordinate and thus
    \[ 2^{\lfloor z / 2^{L(z,z')} \rfloor \cdot 2^{L(z,z')}} \cdot \frac{2^{2^{L(z,z')}-1}}{\sqrt{d-1}}. \]
    We can rewrite this as $\frac{2^{L^*}}{\sqrt{d-1}}$ with $L^* = \lfloor z / 2^{L(z,z')} \rfloor \cdot 2^{L(z,z')} + 2^{L(z,z')} - 1$.
    Thus, $p$ and $p'$ will be in the same level $L(z,z')$ cell if for all $i=1,\dots,d-1$ we have $\lfloor x_i / 2^{L^*} \rfloor = \lfloor x_i' / 2^{L^*} \rfloor$ or equivalently $L(x_i, x_i') \leq L^*$.

    This check lets us know if $p$ and $p'$ are already in the same cell at level $L(z,z')$. If they are, then they were first split with a horosphere normal to the $z$ axis. We can thus determine their L-order by simply checking if $z < z'$.
    Otherwise, they were split first with some hyperplane normal to some $x_i$-axis for some $i\in 1,\dots,d-1$, thus we can find which of $p$ or $p'$ comes first by determining the $d-1$-dimensional Euclidean Z-order of $x$ and $x'$.
\end{proof}

\section{Steiner spanners and \texorpdfstring{$(1+\eps)$}{(1+ε)}-approximate nearest neighbours}\label{sec:approx}
Using Theorem~\ref{thm:locsens} and Lemma~\ref{lem:lorder} it is already possible to get constant-factor approximations with the same methods as in Euclidean space (see Section~\ref{sec:constapprox}).
On the other hand, we will see that better approximations require new methods.

\subsection{Lower bound on hyperbolic spanners}
In constant-dimensional Euclidean space we can get $t$-spanners with $o(n^2)$ edges for any constant $t > 1$.
However, this is already impossible for the hyperbolic plane, because (as in high-dimensional Euclidean space) we can construct arbitrarily large sets of points where the distance between any pair of points is approximately the same.

\begin{lemma}
    For any $n \geq 2$ and any $\eps \in (0,1]$, the point set $P(n,\eps) \in \Hyp^2$ of $n$ points equally spaced around a circle of radius $r = \frac{1}{\eps} \ln n$ has the property that any distance between a pair of points is in $(2r (1-\eps), 2r]$.
\end{lemma}
\begin{proof}
    To prove this, we can use the hyperbolic law of sines:
    for a triangle with sides $a,b,c$ and angles $A,B,C$ opposite the respective sides,
    $\frac{\sin A}{\sinh a} = \frac{\sin B}{\sinh b} = \frac{\sin C}{\sinh c}$.
    We look at the triangle formed by the circle centre, a point $p \in P(n, \eps)$ and the point $q$ halfway along the geodesic from $p$ to a neighbour.
    If we pick $a = \distHd(p,q)$ and $b = r$, this gives $\frac{\sin\frac{\pi}{n}}{\sinh a} = \frac{1}{\sinh r}$ and thus $a = \arsinh(\sinh r \cdot \sin\frac{\pi}{n})$.
    Because $r \geq \ln 2$, we can say $\sinh r \geq \frac{3}{8} e^r$ and in general $\arsinh x > \ln 2x$, so $a > r + \ln\left( \frac{3}{4}\sin\frac{\pi}{n} \right)$.
    Furthermore, $\sin\frac{\pi}{n} \geq \frac{2}{n}$ for $n \geq 2$, which finally gives $a > r + \ln\frac{3}{2n} = r - \eps r + \ln\frac{3}{2} > (1 - \eps)r$.
    Therefore, the distance between $p$ and a neighbour is greater than $(1 - \eps)2r$.
    The maximum distance comes from the circle's diameter.
\end{proof}

These point sets show that it is often not possible to approximate distances with the same techniques as in constant-dimensional Euclidean space.
Low-stretch spanners, and generally any technique that can induce such spanners, will fail.
One such technique is the \emph{well-separated pair decomposition} \cite{CallahanK95} (WSPD).
A pair of point sets $A,B$ is \emph{$s$-separated} when $s \cdot \max\{ \diam(A), \diam(B) \} \leq \dist(A, B)$.
The well-separated pair decomposition of set $P$ is then a collection of $s$-separated pairs such that for any $p,q \in P$ there is a pair $A,B$ with $p \in A$ and $q \in B$.
\emph{Locality-sensitive orderings} \cite{doi:10.1137/19M1246493} also induce a spanner.
Given $\eps \in (0,1/2]$, these are orderings such that for any two points $p,q$, there is an ordering where any point $u$ between $p$ and $q$ in the ordering must be within distance $\eps \dist(p,q)$ from $p$ or $q$.

\begin{corollary}\label{cor:spanner_lower}
    There are point sets in $\Hyp^d$ such that:
    \begin{itemize}
        \item No $t$-spanner exists for $t < 2$ other than the complete graph.
        \item Any well-separated pair decomposition has size $\bigOm{n^2}$.
        \item The number of locality-sensitive orderings needed is always $\bigOm{n}$.
    \end{itemize}
\end{corollary}
\begin{proof}
    Any pair of points in $P(n,\eps)$ not connected directly in the graph will have graph distance more than $\frac{2(1-\eps) 2r}{2r} = 2-2\eps$ times larger than their hyperbolic distance, meaning any $t$-spanner with $t \leq 2-2\eps$ needs to connect all pairs of points.
    The only way a pair $A,B \subseteq P(n,1-\frac{1}{s})$ can be $s$-separated is when both contain only a single element.
    If for example $|A| \geq 2$, then $s \cdot \diam(A) > 2r \geq \distHd(A, B)$.
    Thus, any WSPD must have size $\bigOm{n^2}$.

    If two points $p,q \in P(n,1-\eps)$ are not adjacent in any locality-sensitive ordering, then there is a point $u$ between them where $\min\{ \distHd(p,u),\distHd(q,u) \} > 2r \eps$ while also $\distHd(p,q) \leq 2r$, which makes it impossible to get $\min\{ \distHd(p,u),\distHd(q,u) \} \leq \eps \distHd(p,q)$ as required.
    Thus, any pair needs to be adjacent in some ordering.
    This requires $\bigOm{n}$ orderings.
\end{proof}

We will get around this lower bound by using \emph{Steiner points} and also match it by giving sparse $t$-spanners for any $t > 2$ (leaving the case $t=2$ open).

\subsection{Steiner spanners}
If we add the centre of the circle to the point set $P(n,\eps)$, suddenly we can make sparse spanners.
These are \emph{Steiner spanners} for the original point set:
a Steiner $t$-spanner for a point set $P$ is a geometric graph on $P \cup S$ that is a $t$-spanner for the points of $P$, where $S$ are the \emph{Steiner points}.
Theorem~\ref{thm:steinerspanner} gives a result for Steiner spanners similar to that given by locality-sensitive orderings for spanners in Euclidean space.

\steinerspanner*

Because this Steiner spanner has a nice (bipartite) structure, we can also use it to immediately get a result for normal spanners.

\begin{corollary}
    We can construct in $n \log n \cdot d^{\bigO{d}}\log(1/\eps)/\eps^d$ time a $(2+\eps)$-spanner that has at most $n \cdot d^{\bigO{d}}\log(1/\eps)/\eps^d$ edges.
\end{corollary}
\begin{proof}
    Start by constructing a Steiner $(1+\eps/2)$-spanner according to Theorem~\ref{thm:steinerspanner}.
    To get a normal spanner, we have to remove the Steiner points.
    For each Steiner point $s$, find the closest point $q_s$ that is connected to it.
    Now, connect each point that was connected to $s$ to $q_s$ instead.
    This increases the graph distance between any two points connected to $s$ by at most $2\distHd(s,q_s)$, which means their distance at most doubles.
    Thus, overall the graph distances at most double as well, giving a $(2+\eps)$-spanner.
\end{proof}

The remainder of this section will focus on proving Theorem~\ref{thm:steinerspanner}.
First, we need two properties that significantly limit where the geodesics through a given point can go.

A \emph{cylinder} of radius $r$ around a given geodesic $\ell$ is the set of points with distance at most $r$ to $\ell$.
We will use $\cyl(p)$ to denote the (infinite) cylinder of radius $\arsinh(1)$ around the vertical line through $p$, i.e.\ $\cyl(x,z) = \{ (x',z') \in \Hyp^d \mid \|x - x'\| \leq z' \}$.
We will also call this the \emph{vertical unit cylinder} of $(x,z)$.
This cylinder appears as a Euclidean cone with its apex at $(x,0)$ and aperture $\frac{\pi}{2}$.

\begin{lemma}\label{lem:cylinders}
    For any $p,q \in \Hyp^d$, the geodesic between them will go through $\cyl(p) \cap \cyl(q)$.
\end{lemma}
\begin{proof}
    The geodesic between $p$ and $q$ is an arc from some Euclidean circle normal to the hyperplane $z=0$.
    Using $T_{\sigma,\tau}$ we can always isometrically transform this circle to a unit circle centred at the origin, so without loss of generality we will only look at that case.
    Now, let let $t$ be the highest point on the geodesic. Notice that $\|x(t)\|\leq 1$.
    Then,
    \[ \|x(p) - x(t)\|^2 < (1 - \|x(t)\|)^2 \leq 1 - \|x(t)\|^2 = z(t)^2. \]
    Therefore $t \in \cyl(p)$ and we can similarly argue that $t \in \cyl(q)$.
\end{proof}
The cells of a hyperbolic quadtree are not convex, unlike their Euclidean counterparts.
However, they are still \emph{star-shaped}:
a cell $C$ has a non-empty subset $K \subseteq C$ (its \emph{kernel}), such that any geodesic between a point in $K$ and a point in $C$ is fully contained in $C$.
\begin{lemma}\label{lem:starshape}
    For any $p,q \in \Hyp^d$ and any hyperbolic quadtree cell $C$ that contains both, if $p$ lies below $z = \zd(C) \cdot 3^{h(C)/2}$ then the geodesic between $p$ and $q$ is completely contained in $C$.
\end{lemma}
\begin{proof}
    Let $\partial C^\up$ denote the upper boundary horosphere of horobox $C$.
    For any point $t \in \partial C^\up$, we define the halfspace $H_t$ as the set of points below the hyperplane tangent to $\partial C^\up$ at $t$.
    We will first show that $K = \bigcap_{t \in \partial C^\up} H_t \cap C$ lies in the kernel of $C$.

    Given $p \in K$ and $q \in C$ we need to show that their geodesic $pq$ is contained in $C$.
    For this, let $t$ be the point on $\partial C^\up$ with $x(t) = x(q)$.
    Now, $p,q \in H_t \cap C$ and we can show $pq \subset H_t \cap C$.
    Because $pq$ is a Euclidean circle arc between $p$ and $q$, it cannot intersect the lower boundary or the sides of $C$.
    In the Beltrami-Klein model of hyperbolic space, $H_t$ shows up as a Euclidean halfspace and $pq$ as a Euclidean line segment, which must therefore be fully contained in $H_t$.
    Thus, $pq \subset H_t \cap C$ and $K$ lies in the kernel of $C$.

    We will now consider what $K$ looks like for specific width and height of $C$.
    Each halfspace $H_t$ appears as a Euclidean ball centred at $(x(t), 0)$ with radius $z(t)$.
    For any point on the upper boundary of $K$, its $z$-value is defined by the ball whose point $t \in \partial C^\up$ is furthest away.
    Thus, the lowest $z$-value on the upper boundary of $K$ is given by the $z$-value at which $H_{(\x(C), \zu(C))}$ intersects the line $x = \xm(C)$ opposite it.
    This intersection happens at $z = \sqrt{\zu(C)^2 - \|\xm(C) - \x(C)\|^2} = \zd(C) \sqrt{4^{h(C)} - (d-1) w(C)^2}$.
    When $C$ is a quadtree cell at level $\ell \geq 0$ this evaluates to $z = \zu(C) \cdot \frac{\sqrt{3}}{2} \geq \zd(C) \cdot 3^{h(C)/2}$.
    For $\ell \leq -1$ we get $z = \zd(C) \cdot \sqrt{4^{2^\ell} - \alpha^2 4^\ell} > \zd(C) \cdot \sqrt{3^{2^\ell}} = \zd(C) \cdot 3^{h(C)/2}$.
\end{proof}

\subparagraph*{Spanner construction.}
For each of the $3d+3$ infinite hyperbolic quadtrees from Theorem~\ref{thm:locsens}, we sort $P$ based on the corresponding L-order.
Then, for each pair of points $p,q$ adjacent in an L-order, we find the cell $C$ at the lowest level $\ell$ that contains both $p$ and $q$ in the corresponding infinite hyperbolic quadtree.
Let $c$ be the constant hidden by the big-O notation in Theorem~\ref{thm:locsens}.
We apply quadtree splits to $C$ until we get a set of cells we will call $\cheps(C)$, where each element has diameter at most $\frac{\eps}{c \cdot d \sqrt d}$ times that of $C$.
By Theorem~\ref{thm:quadtree}(ii), this requires $\log( c \cdot d \sqrt d / \eps) + \bigTh{1}$ quadtree splits.
Both $p$ and $q$ are then connected to a Steiner point in each cell of $\cheps(C)$ that intersects $\cyl(p)$, respectively $\cyl(q)$.
We repeat this procedure for all ancestor cells $C'$ of $C$ where there is a $\hat C\in \cheps(C')$ such that $\hat C \subseteq C \subseteq C'$.
Note that this gives a bipartite graph:
all edges are between an input point and a Steiner point.

\begin{claim}\label{claim:steinerdegree}
    A point $p \in P$ gets connected to $2^{\bigO d} d^{\frac{3}{2}d} \log(d / \eps) / \eps^d$ Steiner points.
    When the distance $\delta$ from $p$ to its nearest neighbour in $P$ is large enough this improves to
    \begin{itemize}
        \item $\frac{2^{\bigO d} d^{2d} \log(d / \eps)}{\delta^{d-1} \eps^d}$ Steiner points  if $\sqrt{d} < \delta < d^2 / \eps$,
        \item $\frac{2^{\bigO d} \log(d / \eps)}{\eps}$ Steiner points if $\delta \geq d^2 / \eps$.
    \end{itemize}
\end{claim}
\begin{proof}
    For $p$, the procedure above happens in $\bigTh{d \log\left( d / \eps \right)}$ cells $C'$.
    By Lemma~\ref{lem:cells} the cells in $\cheps(C')$ have width $\bigOm{\eps \delta / d^2}$.
    To count how many cells of $\cheps(C')$ intersect $\cyl(p)$, we consider at any given $z$ how many cells intersect a $(d-1)$-dimensional ball normal to the $z$-axis with Euclidean radius $z$.
    The number of cells of width $w$ that intersect such a ball can be bounded by $\max\{2, 4/w + 1\}^{d-1} = \max\left\{ 2^{d-1},\ \bigO{ \frac{d^2}{\eps \delta} }^{d-1} \right\}$.
    There are $\bigTh{d \sqrt d / \eps}$ distinct $z$-coordinates in $\cheps(C')$, so in total this becomes $\max\left\{ 2^d \cdot \bigO{d \sqrt d / \eps},\ 2^{\bigO d} \cdot  \frac{d^{2d-\frac{1}{2}}}{\eps^d \delta^{d-1}} \right\}$.

    For $\ell(C') \leq 0$, we get a better bound by noticing $|\cheps(C')| = \bigTh{\left( c \cdot d \sqrt d / \eps \right)^d}$.
    We can further generalise this to $\ell(C') \leq \log d$, because then $C'$ is composed of up to $2^d - 1$ level-0 cells and therefore $|\cheps(C')| = \bigO{\left( 2c \cdot d \sqrt d / \eps \right)^d}$.
\end{proof}

\begin{claim}
    This construction gives a Steiner $(1+7\eps)$-spanner when $\eps \leq \frac{1}{14}$.
\end{claim}
\begin{proof}
    Let $\dist_G$ denote the distance in the graph.
    Given $p,q \in P$ we want to prove $\dist_G(p,q) \leq (1+7\eps) \distHd(p,q)$.
    From Theorem~\ref{thm:locsens} we know that $p$ and $q$ are contained in a cell $C$ in one of the infinite hyperbolic quadtrees where $\diam(C) = \bigO{d \sqrt d} \distHd(p,q)$.
    By construction, the cells of $\cheps(C)$ have diameter at most $\eps \distHd(p,q)$.
    Let $p' \in P$ be the point in the same cell from $\cheps(C)$ as $p$ closest in the L-order to $q$, and $q'$ defined analogously.
    Now, $p'$ and $q'$ must both be connected to a Steiner point in each cell of $\cheps(C)$ that intersects $\cyl(p')$, respectively $\cyl(q')$.
    By the combination of Lemma~\ref{lem:cylinders} and Lemma~\ref{lem:starshape}, this means that one of the Steiner points they are both connected to lies in a cell of $\cheps(C)$ that is intersected by the geodesic between $p'$ and $q'$.
    Therefore, $\dist_G(p',q') \leq \distHd(p',q') + 2\eps\distHd(p,q)$.
    We can get $\dist_G(p',q') \leq (1 + 4\eps) \distHd(p,q)$ from this by noticing that $\distHd(p',q') \leq \distHd(p,p') + \distHd(p,q) + \distHd(q,q') \leq (1 + 2\eps) \distHd(p,q)$.

    We will now use induction to prove that $\dist_G(p,q) \leq (1 + 7\eps) \distHd(p,q)$.
    For $\distHd(p,q) = 0$ this holds trivially.
    As induction hypothesis, we now assume that for any pair $v,w \in P$ closer together than $p$ and $q$, we have $\dist_G(v,w) \leq (1+7\eps) \distHd(v,w)$.
    In particular, this gives us $\dist_G(p,p') \leq (1+7\eps) \distHd(p,p')$.
    Because $p$ and $p'$ lie in a cell of diameter at most $\eps \distHd(p,q)$ and $\eps \leq \frac{1}{14}$, we have $\dist_G(p,p') \leq (1 + 7\eps) \eps\distHd(p,q) \leq \frac{3}{2}\eps\distHd(p,q)$.
    For the same reason, $\dist_G(q,q') \leq \frac{3}{2}\eps\distHd(p,q)$ and thus $\dist_G(p,q) \leq (1 + 7\eps)\distHd(p,q)$.
\end{proof}

After replacing $\eps$ by $\eps/7$, these two claims together prove most of Theorem~\ref{thm:steinerspanner}.
What remains is to show that we can efficiently maintain the Steiner spanner while inserting and deleting points.

\subparagraph*{Dynamic manipulation and wrap-up of the proof of Theorem~\ref{thm:steinerspanner}.}
To maintain the construction dynamically, we only need self-balancing binary search trees (e.g.\ red-black tree~\cite{CormenLRS01}).
For each L-order $i = 1, \dots, 3d+3$ we maintain the points of $P$ in that order using a balanced binary search tree $P_i$.
Additionally, we maintain a balanced binary search tree $S$ containing the Steiner points that functions as an associative array:
each Steiner point $s \in S$ is associated with a set of points $E_s$, containing the input points connected to it.
In other words, these represent the edge set of the Steiner spanner.
These sets can again be implemented as self-balancing binary search trees.
Each binary tree for $P_i$, $S$ or $E_s$ will contain less than $n^2$ points\footnote{If $\eps$ is small enough to lead to $\Omega(n^2)$ Steiner points, we can use the complete graph on $P$ as spanner to get the claimed bounds.} and has comparisons that take $\bigO{d}$ time, so searching, adding and removing take $\bigO{d \log n}$ time.

For a point $p \in P$, we can determine in $\bigO{d^2\log n} + d^{\bigO{d}}\log(1/\eps)/\eps^d$ time which Steiner points it should be connected to by the following procedure.
For each $i = 1, \dots, 3d+3$, find (in $\bigO{d \log n}$ time) its neighbours in $P_i$.
Then, for each neighbour we consider the smallest cell $C$ in the $i^\text{th}$ infinite hyperbolic quadtree that it shares with $p$.
The cells to connect to are those in $\cheps(C')$
that intersect $\cyl(p)$, where $C'$ starts as $C$ and goes up its ancestors until $C \in \cheps(C')$.
For each cell, we take its centre as its corresponding Steiner point.
Enumerating all these Steiner points takes $d^{\bigO{d}}\log(1/\eps)/\eps^d$ time.

Finding a Steiner point $s$ in $S$ takes $\bigO{d \log n}$ time, as does adding/removing a connection in $E_s$ (or adding/removing $E_s$ itself, if it did not exist yet or has no connections left).

Adding and removing a point $p$ only affects the points adjacent to $p$ in the L-orders and $p$ itself, so $\bigO{d}$ points altogether.
For each of the affected points, we can remove all connections to Steiner points it had before and then add connections to Steiner points in the new configuration.
This can affect their in total $d^{\bigO{d}}\log(1/\eps)/\eps^d$ connections to Steiner points, so the update takes $\log n \cdot d^{\bigO{d}}\log(1/\eps)/\eps^d$ time using the operations from the previous paragraph.

\subsection{Approximate Nearest Neighbours}
Using Theorem~\ref{thm:steinerspanner} we can also get a data structure for dynamic $(1+\eps)$-approximate nearest neighbours.
We only need to modify the data structure slightly:
for each Steiner point $s$, the points in $E_s$ will now be sorted based on their distance to $s$.
To query for the nearest neighbour of $p$, we then find all Steiner points $p$ would get connected to if it were to get inserted into the data structure.
For each of these Steiner points, we retrieve the input point closest to it.
This gives a small set of candidate points, from which we return the point closest to $p$.
The returned point is the exact nearest neighbour to $p$ using the spanner distances, thus it will be an $(1+\eps)$-approximate nearest neighbour in the actual space.

The sets $E_s$ together store all the edges of the Steiner spanner exactly once and $S$ contains the ($d$-dimensional) Steiner points, so these together take $n \cdot d^{\bigO{d}}\log(1/\eps)/\eps^d$ space.
The size of the trees $P_i$ sums up to $\bigO{d^2n}$, thus the final data structure still requires $n \cdot d^{\bigO{d}}\log(1/\eps)/\eps^d$ space. 

\approxnn*

This method can also be used to dynamically maintain an approximate (bichromatic) closest pair, as done in \cite{doi:10.1137/19M1246493}.
Given two sets $R$ (red) and $B$ (blue) of points, a bichromatic closest pair is a pair $(r,b) \in R \times B$ such that the distance $\distHd(r,b)$ is minimal.
A pair $(r',b')$ is a $(1+\eps)$-approximate bichromatic closest pair when $\distHd(r',b') \leq (1+\eps) \distHd(r,b)$.
To dynamically maintain such a pair, each Steiner point $s$ will need to maintain two sets $R_s$ and $B_s$ instead of the single set $E_s$, corresponding to the red points and the blue points connected to $s$.
The points in both are again sorted by distance to $s$.
Each Steiner point $s$ now gives its closest red and blue point as candidate for a $(1+\eps)$-approximate bichromatic closest pair.

Using dynamically-maintained approximate bichromatic closest pairs, we can also compute approximate minimum spanning trees~\cite{Indyk98} and approximate minimum bottleneck matchings~\cite{Goel01}.

\section{Constant-approximate nearest neighbours}\label{sec:constapprox}
Theorem~\ref{thm:locsens} and Lemma~\ref{lem:lorder} are equivalent to statements about Euclidean quadtrees Chan~\etal~\cite{doi:10.1137/19M1246493} use to find an approximate nearest neighbour and bichromatic closest pair, so we can do the same in hyperbolic space. 
For both problems the algorithm is similar.
Given a point set $P$, we first make $3d+3$ self-balancing binary search trees (e.g.\ red-black tree~\cite{CormenLRS01}) where each sorts the points based on the L-order from one of the infinite quadtrees from Theorem~\ref{thm:locsens}.
By Lemma~\ref{lem:lorder} this takes $\bigO{d^2 n \log n}$ time. Notice that we can add or remove a point in all of these trees in $\bigO{d^2 \log n}$ time.

To get the nearest neighbour of some point $p$, we determine where it would end up in the L-order for each of the trees, then return $q$; the closest of the neighbours.
This takes $\bigO{d^2 \log n}$ time and gives an $\bigO{d \sqrt d}$-approximate nearest neighbour:
the actual nearest neighbour $q'$ of $p$ will be in the smallest cell $p$ is in.
Because the returned point $q$ was closer to $p$ in the L-order than $q'$, it must also be in that cell.
By Theorem~\ref{thm:locsens} this means it can be at most $\bigO{d \sqrt d}$ times further away, making it an $\bigO{d \sqrt d}$-approximate nearest neighbour.

The same reasoning can be used to get an $\bigO{d \sqrt d}$-approximate bichromatic closest pair.
For this, we go through the sorted lists and return the pair of neighbouring points of different colour that is closest.
This takes $\bigO{d^2 n}$ time. The above reasoning yields the following theorem.

\begin{theorem} \label{thm:applications}
Let $P \subset \Hyp^d$ be a given set of $n$ points.
\begin{itemize}
\item We can find an $\bigO{d \sqrt d}$-approximate bichromatic closest pair of $P$ in $\bigO{d^2 n \log n}$ time.
\item We can construct a data structure in $\bigO{d^2 n \log n}$ time that uses $\bigO{d^2 n}$ space\footnote{By only storing the full points once and then referring to them with pointers this can be reduced to $\bigO{dn}$ space.} and can answer queries for an $\bigO{d \sqrt d}$-approximate nearest neighbour in $P$ in $\bigO{d^2 \log n}$ time, and perform updates (point insertions and removals) in $\bigO{d^2 \log n}$ time.
\end{itemize}
\end{theorem}

\section{Proof of Theorem~\ref{thm:quadtree}}\label{sec:quadtreeproof}
\quadtree*
\begin{proof}
\begin{enumerate}[(i)]
    \item
    First assume $C$ is a cell at level $\ell \geq 4$ with child $C'$.
    Then, $\frac{\diam(C')}{\diam(C)} = \frac{2\arsinh\left(2^{2^{\ell-1}-2} \right)}{2\arsinh\left(2^{2^\ell-2} \right)}$ by Lemma~\ref{lem:cells}.
    Here we can use $\ln 2x < \arsinh x < \ln 4x$ to get
    \begin{align*}
        \frac{\arsinh\left(2^{2^{\ell-1}-2} \right)}{\arsinh\left(2^{2^\ell-2} \right)}
        < \frac{\ln\left(4 \cdot 2^{2^{\ell-1}-2} \right)}{\ln\left(2 \cdot 2^{2^\ell-2} \right)}
        = \frac{2^{\ell-1}}{2^\ell - 1}
        = \frac{1}{2} + \frac{1}{2^{\ell + 1} - 2}
        \leq \frac{8}{15},
        \\
        \frac{\arsinh\left(2^{2^{\ell-1}-2} \right)}{\arsinh\left(2^{2^\ell-2} \right)}
        > \frac{\ln\left(2 \cdot 2^{2^{\ell-1}-2} \right)}{\ln\left(4 \cdot 2^{2^\ell-2} \right)}
        = \frac{2^{\ell-1} - 1}{2^\ell}
        = \frac{1}{2} - 2^{-\ell}
        \geq \frac{7}{16}.
    \end{align*}

    Now assume $C$ is at level $\ell \leq -2$.
    Then by Lemma~\ref{lem:cells},
    \[\frac{\diam(C')}{\diam(C)} = \frac{2\arsinh\left(\frac12 \sqrt{\frac{(\alpha')^2 \cdot 4^{\ell-1} + (2^{2^{\ell-1}}-1)^2}{2^{2^{\ell-1}}}} \right)}{2\arsinh\left(\frac12 \sqrt{\frac{\alpha^2 \cdot 4^\ell + (2^{2^\ell}-1)^2}{2^{2^\ell}}} \right)},\]
    where $\alpha$ corresponds to $C$ and $\alpha'$ to $C'$.
    From Lemma~\ref{lem:cells} it follows that either $\alpha'=\alpha$ or $\alpha'=\alpha \cdot 2^{-2^{\ell-1}}$.
    First, we look for an upper bound.
    We observe that in the denominator, the argument $x$ of $\arsinh x$ is between $0$ and $0.15$, thus we may assume $\arsinh x > 0.997 x$.
    For the positive numerator we can use that $\arsinh x \leq x$.
    This gives
    \[
        \frac{2\arsinh\left(\frac12 \sqrt{\frac{(\alpha')^2 \cdot 4^{\ell-1} + (2^{2^{\ell-1}}-1)^2}{2^{2^{\ell-1}}}} \right)}{2\arsinh\left(\frac12 \sqrt{\frac{\alpha^2 \cdot 4^\ell + (2^{2^\ell}-1)^2}{2^{2^\ell}}} \right)}
        <
        \frac{2^{2^{\ell-2}}}{0.997} \sqrt{\frac{ (\alpha')^2 \cdot 4^{\ell-1} + (2^{2^{\ell-1}}-1)^2}{ \alpha^2 \cdot 4^\ell + (2^{2^\ell}-1)^2}}.
    \]
    We can also use that $\alpha' \leq \alpha$, that $(2^{2^\ell}-1)^2 \geq 0$ and that $(2^{2^{\ell-1}}-1)^2 < 0.07 \cdot 4^\ell$ for $\ell \leq -2$, giving
    \[
        \frac{2^{2^{\ell-2}}}{0.997} \sqrt{\frac{ (\alpha')^2 \cdot 4^{\ell-1} + (2^{2^{\ell-1}}-1)^2}{ \alpha^2 \cdot 4^\ell + (2^{2^\ell}-1)^2}}
        <
        \frac{2^{2^{\ell-2}}}{0.997} \sqrt{\frac{1.07 \cdot 4^{\ell-1}}{4^\ell}}
        <
        0.55.
    \]

    To prove a lower bound we work similarly.
    We now additionally use that $\alpha' \geq 2^{-2^{\ell-1}} \alpha$ and that $(2^{2^\ell}-1)^2 < 0.15 \cdot 4^\ell$ for $\ell \leq -2$, giving
    \begin{align*}
        \frac{2\arsinh\left(\frac12 \sqrt{\frac{(\alpha')^2 \cdot 4^{\ell-1} + (2^{2^{\ell-1}}-1)^2}{2^{2^{\ell-1}}}} \right)}{2\arsinh\left(\frac12 \sqrt{\frac{\alpha^2 \cdot 4^\ell + (2^{2^\ell}-1)^2}{2^{2^\ell}}} \right)}
        &>
        0.997 \cdot 2^{2^{\ell-2}} \sqrt{\frac{ (\alpha')^2 \cdot 4^{\ell-1} + (2^{2^{\ell-1}}-1)^2}{ \alpha^2 \cdot 4^\ell + (2^{2^\ell}-1)^2}} \\
        &\geq
        0.997 \sqrt{\frac{4^{\ell-1}}{ 4^\ell + (2^{2^\ell}-1)^2}} \\
        &>
        0.997 \sqrt{\frac{4^{\ell-1}}{ 1.15 \cdot 4^\ell}} \\
        &>
        0.46.
    \end{align*}

    This proves the statement for $\ell \leq -2$ and $\ell \geq 4$.
    We check the remaining cases in Table~\ref{tab:my_label}.
    The lower bound comes from $\ell=2$ and the upper bound from $\ell=0$ with $\alpha'=1$.

    \begin{table}[ht]
        \centering
        \begin{tabular}{|c|c|c|c|c|c|c|c|c|c|}
            \hline
            $\ell$ & \multicolumn{4}{c|}{-1} & \multicolumn{2}{c|}{0} & 1 & 2 & 3
            \\\cline{1-7}
            $\alpha$  & \multicolumn{2}{c|}{$1/\sqrt{2}$} & \multicolumn{2}{c|}{$1$} & \multicolumn{2}{c|}{1} &&&
            \\\cline{1-7}
            $\alpha'$ & $1/\sqrt[4]{8}$ & $1/\sqrt{2}$ & $1/\sqrt[4]{2}$ & $1$ & $1/\sqrt{2}$ & $1$ &&&
            \\\hline\hline
            Ratio & 0.485 & 0.5218 & 0.4795 & 0.5312 & 0.4718 & 0.5605 & 0.526 & 0.4208 & 0.4317
            \\\hline
        \end{tabular}
        \caption{Ratios between the diameter of a child cell and the diameter of its level $\ell$ parent, up to four decimal places.}
        \label{tab:my_label}
    \end{table}

    \item
    First, let $\ell \geq 0$.
    Then by Lemma~\ref{lem:cells}, $\diam(C) = 2\arsinh\left(2^{2^\ell-2} \right)$.
    Here the argument of $\arsinh$ is always at least $\frac{1}{2}$, so we can say $\arsinh x = \bigTh{\log x}$ and get
    \[
        2\arsinh\left(2^{2^\ell-2} \right)
        = \bigTh{\log\left(2^{2^\ell-2} \right)}
        = \bigTh{2^\ell}.
    \]
    Now, let $\ell < 0$.
    By Lemma~\ref{lem:cells}, $\diam(C) = 2\arsinh\left(\frac12 \sqrt{\frac{\alpha^2 \cdot 4^{-k} + (2^{2^{-k}}-1)^2}{2^{2^{-k}}}} \right)$ for some constant $\alpha$.
    Here the argument of $\arsinh$ is always less than $\frac{1}{2}$, so we can say $\arsinh x = \bigTh{x}$ and get
    \[
        2\arsinh\left(\frac12 \sqrt{\frac{\alpha^2 \cdot 4^\ell + (2^{2^\ell}-1)^2}{2^{2^\ell}}} \right)
        = \bigTh{\sqrt{\frac{\alpha^2 \cdot 4^\ell + (2^{2^\ell}-1)^2}{2^{2^\ell}}}}
        = \bigTh{2^\ell}. \qedhere
    \]
    
    \item
    \begin{figure}[b]
        \centering
        \includegraphics{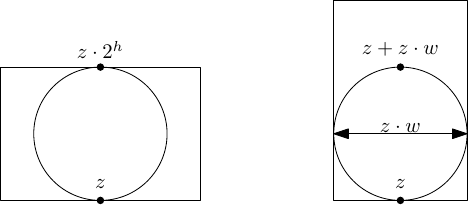}
        \caption{The two possible situations for the inscribed ball of a cube-based horobox, with last coordinates indicated as well as the Euclidean width of the rectangle in the second case.}
        \label{fig:inscribed_ball}
    \end{figure}

    For a cell $C$, the inscribed ball will touch the bounding horospheres at the top and bottom, or the bounding hyperplanes at the sides; see Figure~\ref{fig:inscribed_ball}.
    The hyperbolic diameter of a ball is given by the distance between its highest and lowest point.
    In the first case this is $\ln\left( \frac{\zu(C)}{\zd(C)} \right) = h(C) \ln 2$.
    In the second case, we only know the ball's Euclidean diameter is $\zd(C) \cdot w(C)$, but from that we can calculate that the highest point of the ball is at height $\zd(C) + \zd(C) \cdot w(C)$, so the distance is $\ln\left( \frac{\zd(C) + \zd(C) \cdot w(C)}{\zd(C)} \right) = \ln(w(C) + 1)$.

    In general the diameter of the inscribed ball is the smallest of these two.
    We will show that for quadtree cells this is always $\bigOm{\ln(w(C)+1)}$.
    At level $\ell$, the first case gives $h(C) \ln 2 = \ln\left( 2^{2^\ell} \right)$ for every cell.
    In the second case, for $\ell \geq 0$ we have $\ln(w(C)+1) = \ln\left( \frac{2^{2^\ell-1}}{\sqrt{d-1}} + 1 \right)$, which is at most $\ln\left( 2^{2^\ell} + 1 - \frac{1}{2} \cdot 2^{2^\ell} \right)$ and thus smaller.
    For $\ell < 0$ we have $\ln(w(C)+1) \leq \ln\left( \frac{1 \cdot 2^\ell}{\sqrt{d-1}} + 1 \right)$ by Lemma~\ref{lem:cells}, which is at most $\frac{2^\ell}{\sqrt{d-1}}$ and thus $h(C) \ln 2 = \bigOm{\ln(w(C)+1)}$.
    
    We need to divide the diameter of the inscribed ball by the diameter of the circumscribed ball to get the fatness.
    Notice that $2\diam(C)$ is an upper bound on the diameter of the circumscribed ball, where (ii) tells us that $\diam(C) = \bigTh{2^\ell}$.
    First assume $w < 1$ and $\ell < 0$.
    Then the fatness is at least
    \[
        \frac{ \bigOm{\ln(w(C) + 1)} }{\bigO{2^\ell}}
        = \bigOm{ \frac{w(C)}{2^\ell} }
        = \bigOm{ \frac{2^\ell}{2^\ell \sqrt d} }
        = \bigOm{\frac{1}{\sqrt d}}.
    \]
    Now assume $\ell \geq 0$ instead.
    \[
        \frac{ \bigOm{\ln(w(C) + 1)} }{\bigO{2^\ell}}
        = \bigOm{ \frac{w(C)}{2^\ell} }
        = \bigOm{ \frac{2^{2^\ell}}{2^\ell \sqrt d} }
        = \bigOm{\frac{1}{\sqrt d}}.
    \]
    Finally, assume $w \geq 1$, which means automatically means $\ell \geq 0$ and $2^{2^\ell-1} \geq \sqrt{d-1}$.
    \[
        \frac{ \bigOm{\ln(w(C) + 1)} }{\bigO{2^\ell}}
        = \bigOm{ \frac{ \log w(C) }{2^\ell} }
        = \bigOm{ \frac{2^\ell + \log \frac{1}{d}}{2^\ell} }
        = \bigOm{1}.
    \]
    
    \item
    Consider a quadtree cell $C$ of level $\ell$.
    If $\ell \leq 0$ then it has $2^d$ children, so we now consider $\ell > 0$.
    There we have $\diam(C) = 2\arsinh(2^{2^\ell-2})$ by Lemma~\ref{lem:cells}, which means $2^{2^\ell} = 4\sinh(\frac12\diam(C))$.
    The number of children of $C$ is
    \[
        2^{\frac{2^\ell}{2} (d-1)} + 1
        = \left( 4\sinh\left( \frac12\diam(C) \right) \right)^{\frac12(d-1)} + 1
        = 2^{\bigO{d\cdot \diam(C)}}.
    \]

    Now we will prove the result for the root of the quadtree.
    Assume this root lies at level $\ell > 0$, because otherwise it has $2^d$ children.
    By construction, we know that no quadtree cell $C$ of level $\ell - 1$ can cover $P$.
    In particular this means that the inscribed ball $B$ of $C$ cannot cover $P$, meaning $\diam(P) > \diam(B)$.
    In (iii) we showed $\diam(B) = \ln\left( \frac{2^{2^{\ell-1}-1}}{\sqrt{d-1}} + 1 \right)$, so $2^{2^{\ell-1}} = 2 \sqrt{d-1} (e^{\diam(B)} - 1)$.
    This means the number of children under the root is
    \[
        2^{\frac{2^\ell}{2} (d-1)} + 1
        = \left( 2\sqrt{d-1} \left( e^{\diam(B)} - 1 \right) \right)^{d-1} + 1
        = d^{\bigO{d\cdot \diam(P)}}
        .
    \]
    
    \item
    This follows from Lemma~\ref{lem:cells}.   \qedhere 
\end{enumerate}
\end{proof}

\section{Conclusion}\label{sec:conclusion}
We have presented a quadtree and a dynamically maintained Steiner spanner that work in hyperbolic space.
The constructions are relatively simple and closely correspond to Euclidean versions, so we hope that this will spark further research on algorithms in hyperbolic space.
As an example of this, we were able to give similar results for approximate nearest neighbour search as are known in Euclidean space, which is the best one can hope for as a point set inside a small hyperbolic neighbourhood has a nearly Euclidean metric.
One would hope that better results are possible for point sets that are more spread out~\cite{Kisfaludi-Bak21}.
While our result does not show strong signs of improvement for spread-out point sets, we can observe improved behaviour for larger dimensions for such point sets: in Claim~\ref{claim:steinerdegree} and near the end of the proofs of Theorem~\ref{thm:locsens} and Theorem~\ref{thm:quadtree}(iii), it can be seen that the dependence on $d$ is milder for longer distances.
This seems to be typical for hyperbolic space and is in line with the strong results on dimension reductions in hyperbolic spaces by Benjamini and Makarychev~\cite{benjamini2009dimension}.
Thus, an open question is how well Steiner spanners and approximate nearest neighbour data structures designed specifically for spread-out point sets can perform.

For both spanners and approximate nearest neighbours, our results are optimal in $n$ and match those provided by locality sensitive orderings~\cite{doi:10.1137/19M1246493} in Euclidean space, but there are more complex algorithms for either spanners or approximate nearest neighbours in Euclidean space that give better guarantees.
For the spanner specifically, we come close to the bound for classical spanners but do not match the improved bounds for Steiner spanners \cite{bhore2022}. Is there a hyperbolic Steiner $(1+\eps)$-spanner with only $\bigO{n / \eps^{(d-1) / 2}}$ edges?
On a related note, we do not consider spanner lightness (the total weight of the edges) and with our current construction this is unbounded. Is there a hyperbolic Steiner $(1+\eps)$-spanner whose total weight is $f(\eps,d)$ times the weight of a minimum spanning tree?
These are all avenues for further research.

\bibliography{bibliography.bib}

\end{document}